\newtheorem{theorem}{Theorem}
\newtheorem{lemma}[theorem]{Lemma}
\newtheorem{corollary}[theorem]{Corollary}
\theoremstyle{definition}
\newtheorem{definition}{Definition}[section]
\newcommand{\bR}{\mathbb{R}}
\newcommand{\bZ}{\mathbb{Z}}
\newcommand{\cD}{\mathcal{D}}
\newcommand{\hepsilon}{\widehat{\epsilon}}
\DeclareMathOperator*{\argmax}{arg\,max}
\DeclareMathOperator{\Exp}{\mathbb{E}}
\DeclareMathOperator{\Prob}{\mathbb{P}}
\newcommand{\unif}{\texttt{unif}}
\newcommand{\round}{\texttt{round}}
\newcommand{\lex}{\texttt{lex}}
\newcommand{\worst}{\texttt{worst}}
\newcommand{\best}{\texttt{best}}
\title[Best-Response Dynamics in Lottery Contests]{Best-Response Dynamics in Lottery Contests}
\author{Abheek Ghosh}
\email{abheek.ghosh@cs.ox.ac.uk}
\author{Paul W. Goldberg}
\email{paul.goldberg@@cs.ox.ac.uk}
\affiliation{%
    \department{Department of Computer Science}
    \institution{University of Oxford}
    \city{Oxford}
    \country{UK}
}
\begin{abstract}
We study the convergence of best-response dynamics in lottery contests. We show that best-response dynamics rapidly converges to the (unique) equilibrium for homogeneous agents but may not converge for non-homogeneous agents, even for two non-homogeneous agents. For $2$ homogeneous agents, we show convergence to an $\epsilon$-approximate equilibrium in $\Theta(\log\log(1/\epsilon))$ steps. For $n \ge 3$ agents, the dynamics is not unique because at each step $n-1 \ge 2$ agents can make non-trivial moves. We consider a model where the agent making the move is randomly selected at each time step. We show convergence to an $\epsilon$-approximate equilibrium in $O(\beta \log(n/(\epsilon\delta)))$ steps with probability $1-\delta$, where $\beta$ is a parameter of the agent selection process, e.g., $\beta = n$ if agents are selected uniformly at random at each time step. Our simulations indicate that this bound is tight.
\end{abstract}
\begin{document}


\maketitle


\section{Introduction}\label{sec:into}
Contests are games where agents compete by making costly and irreversible investments to win valuable prizes. The lottery contest by \citet{tullock1980efficient} is one of the most widely used models for studying these environments and has been applied to problems in economics, political science, and computer science~\cite{konrad2009strategy,vojnovic2015contest}. 
To give a concrete real-life application, consider the game among miners in proof-of-work cryptocurrencies like Bitcoin~\cite{chen2019axiomatic,leshno2020bitcoin}. A Bitcoin miner adds the next block (and collects the corresponding reward) with a probability proportional to her (costly) computational power. A lottery contest exactly models the game among these miners (given they are risk neutral): a set of agents compete to win a valuable prize by investing costly effort, and an agent gets a portion of the prize proportional to his effort.

We study the convergence of best-response dynamics in lottery contests. The existence of an equilibrium may not always be a good predictor of the agents' behavior. The traditional explanation of equilibrium is that it results from analysis and introspection by the agents in a situation where the rules of the game, the rationality of the agents, and the agents’ payoff functions are all common knowledge. Both conceptually and empirically, these theories may have problems~\cite{fudenberg1998theory}. A model of the outcome of a game is more plausible if it can be attained in a decentralized manner, ideally via a process that involves agents behaving in a natural-looking self-interested way. Best-response dynamics is arguably the simplest of these models. In best-response dynamics, agents sequentially update their current strategy with one that best responds to that of the other agents. It is especially appropriate in settings, such as lottery contests, where pure-strategy equilibria are guaranteed to exist.

In our model for best response dynamics, we assume that at each time step one of the agents best responds and updates her current strategy. For two agents, this leads to a deterministic process where the agents play alternate moves.\footnote{Two consecutive moves by the same agent is redundant because the agent is already best responding after the first move and the second move does not change anything.} For more than two agents, the dynamics is not fully defined because all agents except the agent who made the most recent move can make a non-trivial transition. So, we consider a randomized model where an agent $i$ best responds with probability $w_{t,i}(x_t)$ at time $t$, given the strategy profile at time $t$ is $x_t$.\footnote{We make the technical assumption that $w_{t,i}(x_t) > 0$ for all agents except the one who played the most recent move and $w_{t,i}(x_t) < 1/2$ for all agents. See Section~\ref{sec:prelim:random} for more details.} We analyze how quickly best-response dynamics converges to an $\epsilon$-approximate equilibrium.

\subsection{Our Results}\label{sec:results}
For two homogeneous agents, we show that best response dynamics converges to an $\epsilon$-approximate pure-strategy Nash equilibrium in $\Theta(\log\log(1/\epsilon) + \log\log(1/\gamma))$ steps, where $\gamma$ is a reasonable function of the initial action profile (generally equal to the smallest positive action in the initial profile). On the other hand, for $n \ge 3$ homogeneous agents, we show convergence in $O(\alpha \log(n/(\epsilon\delta)) + \beta \log\log(1/\gamma))$ steps with probability $1-\delta$, where $\alpha$ and $\beta$ are related to the randomized agent selection process (see Theorem~\ref{thm:hom}), e.g., $\alpha = n$ and $\beta = 1$ if agents are selected uniformly at random each time step. 
We also provide a lower bound of $\Omega(\eta \log(n\delta) + \log\log(1/\epsilon) + \log\log(1/\gamma))$ with probability $1-\delta$, where $\eta$ is again related to the randomized agent selection process, e.g., $\eta = n$ for uniform selection.
For non-homogeneous agents, we provide examples to show that the dynamics may not converge, even for $2$ agents. 
We finish with some simulations that support our results.

\paragraph{Intuition for the proof of convergence.}
We divide the analysis into three main parts. 
The first part corresponds to a \textit{warm-up} phase of the dynamics. Intuitively, there is a certain domain $\cD$ such that the best response dynamics avoids corner cases and is easier to analyze if the output profile is in $\cD$. The warm-up phase corresponds to the time it takes to ensure that the output profile inside $\cD$ (where it stays thereafter) and is proportional to the time it takes for every agent to make at least one move. 

The third part of our analysis requires the total output to be sufficiently large often. In the second part of the analysis, we prove this result. We discretize the range of values that the total output can take into countably infinitely many intervals. We consider the discrete stochastic process that indicates the interval in which the total output lies at a given time point. We analyze this discrete stochastic process and prove that it frequently visits the states that correspond to a high total output, as required. We do this using suitable concentration bounds and coupling in Markov chains~\cite{mitzenmacher2017probability}. This part of the analysis also uses a few interesting insights related to the best response dynamics, e.g., the total output never decreases by more than a constant factor in a time step after the warm-up phase.

In the third part of the analysis, we use the potential function $f$ given in equation \eqref{eq:potential} to prove the rate of convergence. We show that the potential $f$ is strongly convex and smooth when the total output is sufficiently large, which we then use to show that the potential decreases by a multiplicative factor every time step. Our analysis here is motivated by the $\log(1/\epsilon)$ convergence rate achieved for the minimization of strongly convex and smooth functions using coordinate descent~\cite{wright2015coordinate,bertsekas2015convex,boyd2004convex}. On the other hand, the potential function is always non-increasing. Putting these results together, we get our rate of convergence.

Notice that our convergence rate is proportional to $\log(1/\epsilon)$ rather than $1/\epsilon$. The potential function $f$ can be used to give a $1/\epsilon$ convergence rate with less effort, but 
a $1/\epsilon$ convergence rate only gives a pseudo-polynomial time convergence rather than the polynomial time convergence of $\log(1/\epsilon)$.\footnote{If the homogeneous cost $c$ is small, i.e., $1/c$ is a large integer, then an additive $1$-approximate equilibrium may take $O(1/c)$ steps given a $1/\epsilon$ bound, which is exponential in the input size of $\log(1/c)$.} Many hard complexity classes, like PLS~\cite{johnson1988easy} and CLS~\cite{fearnley2022complexity}, always have $1/\epsilon$ pseudo-polynomial time convergence, and such a convergence rate is generally less interesting from a complexity theory perspective.


\subsection{Related Work}
The paper by \citet{ewerhart2017lottery} is the most closely related to ours. \citet{ewerhart2017lottery} showed that a lottery contest with homogeneous agents is a best-response potential game. The set of best-response potential games~\cite{voorneveld2000best,kukushkin2004best,dubey2006strategic,uno2007nested,uno2011strategic,park2015potential} is a super-set of ordinal potential games, which itself is a super-set of exact potential games~\cite{monderer1996potential}. In an exact potential game, the change in the potential function matches exactly the change in the utility of the agent making the move, while in an ordinal potential game, the change in the potential function has the same sign as the change in the utility of the agent making the move. A best-response potential game satisfies a weaker requirement: given the profile $x_{-i}$ of all agents except $i$, the best-response of agent $i$ matches the action that maximizes the potential function given $x_{-i}$. Notice that being a best-response potential game does not imply anything about actions that are not best responses, i.e., actions that increase an agent's utility but are not best responses may decrease the potential function and vice-versa. Lottery contests do not have an exact potential, so they are less structured than exact potential games and are strictly outside the class~\cite{ewerhart2017lottery}. Lottery contests with non-homogeneous agents have also been proven not to be ordinal potential games~\cite{ewerhart2020ordinal}.

In a best-response potential game, every non-trivial best-response transition increases the potential function. For a bounded potential function, it is not hard to see that being a best-response potential game guarantees the convergence of best-response dynamics, given that there is always a best response to every action profile. Some games, including the lottery contest, may not always have a best response. In particular, as discussed in Section~\ref{sec:prelim:br}, an agent does not have a best response in the lottery contest if every other agent is playing $0$. One of the early steps in our convergence proof shows that such corner cases do not occur too many times.

Exact potential games are equivalent to explicit congestion games. The complexity of computing a pure-strategy Nash equilibrium in these games is PLS-complete~\cite{johnson1988easy,fabrikant2004complexity} and a mixed-strategy Nash equilibrium is CLS-complete~\cite{fearnley2022complexity,babichenko2021settling}; these classes are believed to be hard (like NP-complete problems are believed to be hard, although arguably less strongly). Moreover, these hardness results imply that convergence of best-response dynamics can take exponentially many steps in the worst case (unconditional bound, does not depend upon assumptions like PLS $\neq$ P). As the set of best-response potential games is a super-set of exact potential games, these hardness results hold for best-response potential games as well.
For lottery contests with homogeneous agents, our fast convergence result for the randomized model implies polynomial-time convergence for best-case and average-case analysis, which is in contrast to the lack of such results for general best-response potential games. The worst-case rate of convergence for lottery contests is currently unknown.

\cite{moulin1978strategically} implicitly shows strategic equivalence between contests and zero-sum games. This directly implies convergence of fictitious play dynamics for two agents~\cite{ewerhart2020fictitious}, but no such result has been proven for three or more agents. A lottery contest corresponds to a Cournot game with isoelastic inverse demand and constant marginal costs. There are convergence results of learning dynamics for specific types of Cournot games, like Cournot oligopoly with strictly declining best-response functions~\cite{deschamps1975algorithm,thorlund1990iterative}, Cournot game with linear demand~\cite{slade1994does}, aggregative games that allow monotone best-response selections~\cite{huang2002fictitious,dubey2006strategic,jensen2010aggregative}, and others~\cite{dragone2012static,bourles2017altruism}. However, all these methods do not apply to the lottery contest whose best-response function is not monotone~\cite{dixit1987strategic}. 
A different line of research has studied the convergence (or chaotic behavior) of learning dynamics in other types of contests (like all-pay auctions) and Cournot games (e.g.,~\cite{puu1991chaos,warneryd2018chaotic,cheung2021learning}), but these techniques and results also do not apply to lottery contests.



\section{Preliminaries}\label{sec:prelim} 
Let $[n] = \{1, 2, \ldots, n\}$. Let $\lg(x) = \log_2(x)$ and $\ln(x) = \log_e(x)$.

\subsection{Lottery Contest}
A lottery contest has $n$ agents and one prize of unit value (normalized). The agents simultaneously produce output $x = (x_1, \ldots, x_n) \in \bR_{\ge 0}^n$. 
Agent $i$ has a linear cost function and incurs a cost of $c_i x_i$ for producing output $x_i$, where $c_i > 0$. 
Agent $i$ gets an amount of prize proportional to $x_i$ if at least one agent produces a strictly positive output, else it is $1/n$.\footnote{Some papers in the literature, e.g., \cite{dasgupta1998designing}, assume that all agents get a prize of $0$ if they all produce a $0$ output. Our analysis and results are exactly the same for this alternate assumption as well.} The utility of agent $i$, denoted by $u_i(x)$, is
\begin{equation}\label{eq:utility:single}
    u_i(x) = \frac{x_i}{\sum_j x_j} - c_i x_i.
\end{equation}
An equivalent formulation of this model has $\tilde{u}_i(x) = \frac{V_i x_i}{\sum_j x_j} - x_i$, where $V_i = 1/c_i$ and where we scale up the utility function by the constant factor $V_i$, which does not affect the strategies of the agents. 

\subsection{Best-Response Dynamics}\label{sec:prelim:br}
Given the action profile $x_{-i} = (x_1, \ldots, x_{i-1}, x_{i+1}, \ldots, x_n)$ of all agents except $i$, the best response of agent $i$ is an action $x_i$ such that
\begin{equation*}
    x_i \in \argmax_{z \ge 0} u_i(z, x_{-i}) = \argmax_{z \ge 0} \frac{z}{z + \sum_{j \neq i} x_j} - c_i z.
\end{equation*}
Agent $i$ has a unique best response if $x_j > 0$ for some $j \neq i$, i.e., if the output produced by at least one other agent $j$ is non-zero. This unique best response can be computed by taking a derivative of $u_i(z, x_{-i})$ with respect to $z$, and is equal to $\sqrt{(\sum_{j \neq i} x_j) / c_i} - (\sum_{j \neq i} x_j)$ if $\sum_{j \neq i} x_j < 1/c_i$ and $0$ otherwise. On the other hand, an agent has no best response if the output produced by every other agent is $0$. If $x_{-i} = 0$, then by producing an output of $\epsilon > 0$, agent $i$ gets an utility of $u_i(\epsilon,0) = 1 - c_i \epsilon$, which is strictly more than $u_i(0,0) = 1/n$ for small $\epsilon$, so $x_i = 0$ cannot be a best response. Further, any $\epsilon > 0$ cannot be a best response because, $u_i(\epsilon/2,0) = 1 - c_i \epsilon/2 > 1 - c_i \epsilon = u_i(\epsilon,0)$.
To circumvent this technical issue, we assume that there is a very small positive value $a < 1/4$ such that $x_i = a$ is the best response of any agent $i$ if $x_{-i} = 0$. Formally, let $BR_i(x_{-i})$ denote the best response of agent $i$ given $x_{-i}$, which can be written as
\begin{equation}\label{eq:br:single}
    BR_i(x_{-i}) = 
    \begin{cases} 
        a, & x_{-i} = 0,  \\
        \sqrt{(\sum_{j \neq i} x_j) / c_i} - (\sum_{j \neq i} x_j), & 0 < \sum_{j \neq i} x_j \le 1/c_i, \\
        0, & \sum_{j \neq i} x_j > 1/c_i.
    \end{cases}
\end{equation}
We shall work with this assumption---an agent plays $a$ if everyone else plays $0$---throughout our paper.\footnote{This issue of not having any best response to $0$ can also be resolved by an alternate technical assumption: the prize is given to agent $i$ with probability $\frac{x_i}{b + \sum_j x_j}$ and agent $i$'s expected utility is $\frac{x_i}{b + \sum_j x_j} - c_i x_i$, where $b$ is a small positive constant. Notice that, with this assumption, the prize may not get allocated to any agent (or is allocated to a pseudo agent) with a positive probability of $\frac{b}{b + \sum_j x_j}$, unlike our model. We \textit{expect} all results in this paper to hold for this alternate model as well.\label{fn:alternateModel}}

Slightly overloading notation, let $x_t = (x_{t,i})_{i \in [n]}$ denote the action profile of the agents at time $t$ in the best-response dynamics. The best-response dynamics starts with an initial profile $x_0 = (x_{0,i})_{i \in [n]}$. At each time step $t \ge 0$, an agent $i_t \in [n]$ makes a best response move. 
Formally, $x_{t+1,i_t} = BR_{i_t}(x_{t, -i_t})$ and $x_{t+1, j} = x_{t, j}$ for $j \neq i_t$. In this paper, we study the convergence (or non-convergence) and the rate of convergence of this best-response dynamics.

\subsubsection{Randomized model for $n \ge 3$ agents.}\label{sec:prelim:random}
When there are just two agents, the best-response dynamics proceeds in a unique manner: the two agents make alternate best-response moves because making consecutive moves is redundant. But, when there are $n \ge 3$ agents, the best-response dynamics is not unique. At any time step, the $n-1 \ge 2$ agents who did not make the most recent best-response move can possibly make a non-trivial move. The rate of convergence depends upon which agent makes a move at a given time step.

We consider a randomized model for selecting the agent that makes the transition at any given time point $t$. In this random selection model, agent $i$ makes the best-response move with probability $w_{t,i}(x_t)$ at any time $t$. The probability $w_{t,i}(x_t)$ models the relative activity of agent $i$ at time $t$ given the current profile $x_t$. 
We assume that $w_{t,i}(x_t) \le U < 1/2$ for all agents and that $w_{t,i}(x_t) \ge L > 0$ for all agents except the agent who made the last transition. 
Our analysis will be worst-case over all $w_{t,i}$ given the parameters $0 < L \le U < 1/2$.
If $w_{t,i}(x_t) = 0$ for some $i$ for all $t$ and $x_t$, then $i$ never moves and we may never reach an equilibrium.\footnote{The best-response dynamics may converge with the remaining $n-1$ agents. We essentially have $n-1$ agents and a pseudo-agent with a constant output, similar to the discussion in footnote~\ref{fn:alternateModel}.}
The assumption $w_{t,i}(x_t) \le U < 1/2$ is required in one of the steps in our analysis (Lemma~\ref{lm:markov}), but is not unreasonable, especially for a large number of agents. For our current analysis, relaxing this assumption seems non-trivial, but may be achievable with an alternate analysis.

An important special case of our model is to assume that $w_{t,i}(x_t) = 1/n$ for all $t$, $i$, and $x_t$, i.e., the agent making the move is selected uniformly at random.

\subsection{Equilibrium}
A lottery contest always has a pure-strategy Nash equilibrium (which is also the unique equilibrium, including mixed-strategy Nash equilibria, see, e.g., \cite{vojnovic2015contest}, Chapter~4). So, we exclusively focus on pure equilibria in this paper.
\begin{definition}[Pure-Strategy Nash Equilibrium]
    An action profile $x = (x_1, \ldots, x_n)$ is a pure-strategy Nash equilibrium if it satisfies
    \[
        u_i (x_i, x_{-i}) \ge u_i (x_i', x_{-i}), 
    \]
    for every agent $i$ and every action $x_i'$ for agent $i$.
\end{definition}
In general, best-response dynamics in a lottery contest never exactly reaches the equilibrium, rather it may \textit{converge} to the equilibrium. The dynamics converges to an equilibrium if it reaches an $\epsilon$-approximate equilibrium in finite time for any $\epsilon > 0$.
\begin{definition}[Approximate Pure-Strategy Nash Equilibrium]
    An action profile $x = (x_1, \ldots, x_n)$ is an $\epsilon$-approximate pure-strategy Nash equilibrium, for $\epsilon > 0$, if it satisfies
    \[
        u_i (x_i, x_{-i}) \ge (1- \epsilon) u_i (x_i', x_{-i}), 
    \]
    for every agent $i$ and every action $x_i'$ for agent $i$.
\end{definition}

\subsection{Homogeneous Agents}
All agents have the same cost/utility function: for some $c > 0$, $u_i(x) = \frac{x_i}{\sum_j x_j} - c x_i$ for every $i \in [n]$. Notice that the best response of each agent is now identical given the same action profile of the remaining agents; let us, therefore, denote the best response as $BR = BR_i$ for every $i \in [n]$ by suppressing $i$ in \eqref{eq:br:single}.
There is a unique equilibrium where each agent $i$ plays an action (e.g., see \cite{vojnovic2015contest}, Chapter~4)
\begin{equation}\label{eq:equi:h:single}
    x^*_i = \frac{n-1}{n^2c}, \text{ for $i \in [n]$, at equilibrium.}
\end{equation}
For example, if there are only two agents, $n=2$, then $x^*_1 = x^*_2 = 1/(4c)$.

Notice that, w.l.o.g., we can assume that $c = 1$. This is achieved by a simple change of variable: we replace $x$ with $y = c x$ and $a$ with $c a$. Then the equilibrium is $y_i^* = c x_i^* = c\frac{n-1}{n^2c} = \frac{n-1}{n^2}$ for every $i$. The best response dynamics also changes similarly: 
\begin{itemize}
    \item \sloppy if $0 < \sum_{j \neq i_t} x_{t, j} \le 1/c \Longleftrightarrow 0 < \sum_{j \neq i_t} y_{t, j} = \sum_{j \neq i_t} c x_{t, j} \le 1$, then $y_{t+1, i_t} = c x_{t+1, i_t} = c \left(\sqrt{(\sum_{j \neq i} x_j) / c} - (\sum_{j \neq i} x_j) \right) = \sqrt{(\sum_{j \neq i} c x_j)} - (\sum_{j \neq i} c x_j) = \sqrt{(\sum_{j \neq i} y_j)} - (\sum_{j \neq i} y_j)$;
    \item if $\sum_{j \neq i_t} x_{t, j} > 1/c \Longleftrightarrow \sum_{j \neq i_t} y_{t, j} > 1$, then $y_{t+1, i_t} = c x_{t+1, i_t} = 0$;
    \item if $\sum_{j \neq i_t} x_{t, j} = 0  \Longleftrightarrow \sum_{j \neq i_t} y_{t, j} = 0$, then $y_{t+1, i_t} = c x_{t+1, i_t} = c a$.
\end{itemize}
So, $y_t$ behaves exactly like a Tullock contest where $c = 1$, and analyzing the game where $c = 1$ is equivalent to analyzing any game for arbitrary $c > 0$.

\subsection{A Potential Function for Homogeneous Agents}
In our analysis, in Section~\ref{sec:hom_n_ub}, we shall use the following potential function, which is similar to the one introduced by \citet{ewerhart2017lottery},
\begin{equation}\label{eq:potential}
    f(x) = \frac{1}{3} \left( \sum_{i} x_i \right)^3 - \sum_{i < j} x_i x_j + \frac{1}{6} \left(1 - \frac{1}{n} \right)^3,
\end{equation}
where $x = (x_i)_{i \in [n]} \in \bR_{\ge 0}^n$ is the action profile. Unlike usual notation, the action that \textit{minimizes} (and not maximizes) the potential function in equation \eqref{eq:potential} corresponds to the best response of an agent, given the action of other agents. We define $f$ this way to make our analysis a minimization problem and arguably easier to read. We shall prove relevant properties of $f$ in Section~\ref{sec:hom_n_ub}.

\section{Non-Homogeneous Agents}\label{sec:non-hom}

In this section, we provide examples with two non-homogeneous agents to demonstrate that best-response dynamics may not converge. These also implies non-convergence for three or more non-homogeneous agents. In the examples below, let the cost of agent $1$ be $1$ (normalized) and agent $2$ be $c \le 1$ (w.l.o.g.). Notice that agent $2$ has a lower cost and is stronger than agent $1$.

\begin{example}\label{ex:nonhom:1}
Let $c = 1/10$. Let $a = 10^{-5}$. The agents start from the initial profile $x_0 = (0, a)$, which leads to the cycle given in Table~\ref{tab:nonhom}.
\begin{table}[H]
    \centering
    \begin{tabular}{| c | c | c | c | c | c | c | c | c | c }
        \hline
        $x_{t,1}$ & $0.00000$ & $0.00315$ & $0.00315$ & $0.24321$ & $0.24321$ & $0.00000$ & $0.00000$ & $0.00315$ & $\ldots$ \\
        \hline
        $x_{t,2}$ & $0.00001$ & $0.00001$ & $0.17439$ & $0.17439$ & $1.31631$ & $1.31631$ & $0.00001$ & $0.00001$ & $\ldots$ \\
        \hline
    \end{tabular}
    \caption{Non-convergence of best-response dynamics for non-homogeneous agents (Example~\ref{ex:nonhom:1}).}
    \label{tab:nonhom}
\end{table}
\end{example}
The intuition for the cycle in Example~\ref{ex:nonhom:1} is that the stronger (lower cost $c = 1/10$) agent's best response may overshoot the upper limit on the output of the weaker agent (cost $1$), which makes the weaker agent best-respond with $0$. Then, the stronger agent plays $a$, then the weaker agent plays $\sqrt{a} - a$, and so on, and the output keeps on increasing until the stronger agent overshoots the upper limit of the weaker agent again. Such an issue never occurs if the agents are homogeneous.

We know that when $c = 1$, i.e., the two agents are homogeneous, we have convergence to the equilibrium. On the other hand, Example~\ref{ex:nonhom:1} gave a cycle for $c = 1/10$. So, a natural question is: how non-homogeneous should the agents be to get a cycle? Example~\ref{ex:nonhom:2} below shows that if $c \le 4/25 = 1/6.25$, then we can get a cycle. For $c \in (4/25,1)$, our simulations show convergence. We defer a formal analysis of the dynamics with almost homogeneous agents for future work.

\begin{example}\label{ex:nonhom:2}
Let $c = 4/25$. Notice that, if $a = 1/4$, we get the following cycle:
\[
    (0, 1/4) \longrightarrow (1/4, 1/4) \longrightarrow (1/4, 1) \longrightarrow (0, 1) \longrightarrow (0, 1/4) \longrightarrow \ldots.
\]
We next show that such cycles can also be constructed for certain arbitrarily small values of $a$. Notice that we can write a reversed best-response dynamics as follows: If agent $1$ makes the move at time $t$, then we have $x_{t+1,1} = \sqrt{x_{t,2}} - x_{t,2}$, which can be reversed as $x_{t,2} = \frac{(1 \pm \sqrt{1 - 4 x_{t+1,1}})^2}{4}$ (notice that there can be two possible values of $x_{t,2}$ (the $\pm$ sign) that lead to the same $x_{t+1,1}$; we will focus on the smaller $x_{t,2}$). Similarly, if agent $2$ made the move at time $t$, then we have $x_{t+1,2} = \sqrt{\frac{x_{t,1}}{c}} - x_{t,1}$, which can be written as $x_{t,1} = \frac{(1 \pm \sqrt{1 - 4 c x_{t+1,2}})^2}{4c}$. So, starting from $(1/4, 1/4)$ and going backward, we can have the following reversed sequence going to $0$:
\[
    \left(\frac{1}{4}, \frac{1}{4}\right) \longleftarrow  \left(\approx \frac{1.1}{100}, \frac{1}{4}\right) \longleftarrow  \left(\approx \frac{1.1}{100}, \approx \frac{1.2}{10^4}\right) \longleftarrow  \left(\approx \frac{2.3}{10^9}, \approx \frac{1.2}{10^4}\right) \longleftarrow  \left(\approx \frac{2.3}{10^9}, \approx \frac{5.5}{10^{18}}\right) \longleftarrow \ldots.
\]
By selecting one of the $x_{t,2}$ values as $a$ in the (reversed) sequence above, we can construct a cycle.
\end{example}
If $c < 4/25$, we can extend the examples above to have a set of values of $a$ that has a strictly positive measure in $[0,b]$, for any $b > 0$, and that leads to a cycle. See Example~\ref{ex:nonhom:3}. In other words, the non-convergence result is \textit{generic}. Note that the set of values of $a$ that cause a cycle can never cover the entire domain because there is always the equilibrium point $(c/(1+c)^2, 1/(1+c)^2)$ (and, likely, a neighborhood of this point of positive measure).
\begin{example}\label{ex:nonhom:3}
Let $c = 1/100$. Starting from $x_{t,2} \ge 1$, we track the reverse best-response dynamics (as done in Example~\ref{ex:nonhom:2}) as follows: $x_{t-1,1} \in [0.0098, 1)$, $x_{t-2,2} \in [9.80/10^5, 0.9803]$, $x_{t-3,1} \in [9.61/10^{11}, 0.0094]$, and so on. If we set $a$ equal to a point in the intervals corresponding to $x_{\tau,2}$, for some $\tau$, we will get a cycle.
\end{example}

\section{Two Homogeneous Agents}\label{sec:hom_2_agents}
In this section, we prove the rate of convergence of best-response dynamics for two homogeneous agents.
The initial state is $x_0 = (x_{0,1}, x_{0,2})$, and the state after $t \ge 0$ best-response moves is $x_t = (x_{t,1}, x_{t,2})$. 
We assume that agent $1$ makes the best-response move when $t$ is odd, i.e., $i_t = 1$ if $t$ odd, and agent $2$ makes the best-response move when $t$ is even, i.e., $i_t = 2$ if $t$ even. This is w.l.o.g. because: (i) it is redundant for an agent to play two consecutive best-response moves, therefore the two agents should alternate; (ii) as the agents are symmetric, it does not matter who makes the first move. Formally, for $t = 0, 1, 2, \ldots$, the best response dynamics goes as follows:
\begin{align*}
    x_{t+1, 2} &\in BR(x_{t, 1}), \text{ and } x_{t+1, 1} = x_{t, 1},\quad \text{ if $t = 0, 2, 4, \ldots$,} \\
    x_{t+1, 1} &\in BR(x_{t, 2}), \text{ and } x_{t+1, 2} = x_{t, 2},\quad \text{ if $t = 1, 3, 5, \ldots$.}
\end{align*}
where $BR = BR_1 = BR_2$ as defined in equation \eqref{eq:br:single}.

Next, we prove our main technical lemma for two agents that we use to prove the $O(\log\log)$ convergence of best-response dynamics.
\begin{lemma}\label{lm:2hom}
Let $z_0, z_1, z_2, \ldots$ be a sequence defined as follows:
\begin{align*}
    z_0 &= \gamma, \qquad\qquad\qquad\text{where $0 < \gamma < \frac{1}{2}$,} \\
    z_{t+1} &= \sqrt{z_t (1 - z_t)}, \quad\ \text{for $t \ge 0$. }
\end{align*}
Notice that $z_{t+1}$ is the geometric mean of $z_t$ and $(1-z_t)$.
The sequence $(z_t)_{t \ge 0}$ satisfies the following properties:
\begin{enumerate}
    \item $0 < z_{t} < z_{t+1} < \frac{1}{2}$ for $t \ge 0$,
    \item for any $0 < \epsilon < \frac{1}{2}$, $z_t \ge \frac{1}{2} - \epsilon$ if and only if $t \ge \tau$, where $\tau = \lg\lg(\frac{1}{\gamma}) + \lg\lg(\frac{1}{\epsilon}) + \Theta(1)$.
\end{enumerate}
\end{lemma}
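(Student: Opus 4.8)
The plan is to reparametrise by the gap $y_t:=\tfrac12-z_t>0$ to the limit $\tfrac12$, and to split the analysis at the \emph{constant} threshold $z_t=\tfrac14$. Throughout I take $\gamma\le\tfrac14$ and $\epsilon\le\tfrac14$, which is the regime relevant to the application; for larger $\gamma$ or $\epsilon$ the corresponding $\lg\lg$ term is $O(1)$ and the warm-up phase is empty, so those cases are degenerate and I would not dwell on them. I would first record two elementary facts. \textbf{(a)} Part~(1): if $z_t\in(0,\tfrac12)$ then $z_t(1-z_t)\in(0,\tfrac14)$, so $z_{t+1}=\sqrt{z_t(1-z_t)}\in(0,\tfrac12)$, and $z_{t+1}^2-z_t^2=z_t(1-2z_t)>0$ gives $z_{t+1}>z_t$; hence $(z_t)$ is strictly increasing with supremum $\tfrac12$. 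In particular, for Part~(2) it suffices to locate the first index $\tau$ with $z_\tau\ge\tfrac12-\epsilon$, since the ``if and only if'' is then automatic from monotonicity. \textbf{(b)} The exact identity
\[
y_{t+1}\;=\;\tfrac12-\sqrt{\tfrac14-y_t^2}\;=\;\frac{y_t^2}{\tfrac12+\sqrt{\tfrac14-y_t^2}},
\]
whose denominator always lies in $(\tfrac12,1)$, so that $y_t^2<y_{t+1}<2y_t^2$ for every $t$.

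Second, I would analyse the warm-up (Phase~A): the steps with $z_t<\tfrac14$, which generate the $\lg\lg(1/\gamma)$ term. For $z_t<\tfrac14$ we have $1-z_t\in(\tfrac34,1)$, hence $\sqrt{\tfrac34\,z_t}<z_{t+1}<\sqrt{z_t}$; writing $M_t:=\lg(1/z_t)$ this becomes $\tfrac12 M_t<M_{t+1}<\tfrac12 M_t+c_0$ with $c_0:=\tfrac12\lg\tfrac43$. These inequalities hold for every $t$ below the first index $t_1$ with $z_{t_1}\ge\tfrac14$, so iterating gives $2^{-t}M_0<M_t<2^{-t}M_0+2c_0$ there. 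The upper estimate forces $M_t<1+2c_0<2$ as soon as $2^{-t}M_0\le1$, i.e. $t\ge\lg M_0$, so $t_1\le\lceil\lg\lg(1/\gamma)\rceil$; the lower estimate keeps $M_t>2$ (hence $z_t<\tfrac14$) while $2^{-t}M_0>2$, so $t_1\ge\lg\lg(1/\gamma)-1$. Thus $t_1=\lg\lg(1/\gamma)+\Theta(1)$. From $z_{t_1-1}<\tfrac14$ I also get $z_{t_1}<\tfrac{\sqrt3}{4}$, hence $y_{t_1}\in(\tfrac1{16},\tfrac14]$ and $L_{t_1}:=\lg(1/y_{t_1})\in[2,4)$; and since $\epsilon\le\tfrac14$ gives $\tfrac12-\epsilon\ge\tfrac14$, the target index satisfies $\tau\ge t_1$.

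Third, Phase~B: the steps $t\ge t_1$, where $y_t\le\tfrac14$, which generate the $\lg\lg(1/\epsilon)$ term. In terms of $L_t=\lg(1/y_t)$ the bound $y_t^2<y_{t+1}<2y_t^2$ reads $2L_t-1<L_{t+1}<2L_t$, started from $L_{t_1}\in[2,4)$. Iterating the lower recursion via $L_{t+1}-1>2(L_t-1)$ gives $L_t>2^{t-t_1}+1$; iterating the upper one gives $L_t<2^{t-t_1+2}$. Since by monotonicity $z_t\ge\tfrac12-\epsilon$ is equivalent to $L_t\ge\lg(1/\epsilon)$, the first bound yields $\tau\le t_1+\lceil\lg\lg(1/\epsilon)\rceil$ and the second yields $\tau\ge t_1+\lg\lg(1/\epsilon)-2$. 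Combining with $t_1=\lg\lg(1/\gamma)+\Theta(1)$ gives $\tau=\lg\lg(1/\gamma)+\lg\lg(1/\epsilon)+\Theta(1)$.

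The main obstacle is getting the \emph{leading constants} right rather than just the order: a crude estimate would place $y_{t+1}$ somewhere between $y_t^{1+o(1)}$ and $y_t^{2+o(1)}$ and hence produce a wrong coefficient on $\lg\lg$. The coefficients come out exactly $1$ because the only slack in the two regimes---the factor $\sqrt{3/4}$ in Phase~A and the factor $2$ in $y_t^2<y_{t+1}<2y_t^2$ in Phase~B---enters $M_t$ (respectively $L_t$) only \emph{additively}, so it is washed out by the outer logarithm into the $\Theta(1)$ term; this is precisely why the phase boundary must be a constant $\kappa\in(0,\tfrac12)$ (any fixed value works, $\tfrac14$ being convenient). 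The second point requiring care is that the recursions $M_{t+1}\le\tfrac12 M_t+c_0$ and $L_{t+1}\ge 2L_t-1$ are valid only inside their own regime, so the two iterations must be chained at $t_1$ (and one must verify $\tau\ge t_1$, which is where $\epsilon\le\tfrac14$ is used) rather than both run from $t=0$.
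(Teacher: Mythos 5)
Your proof is correct and follows essentially the same two-phase decomposition as the paper: split at the constant threshold $z_t = \tfrac14$, show that the warm-up phase from $\gamma$ up to $\tfrac14$ and the contraction phase from $\tfrac14$ to $\tfrac12-\epsilon$ each take $\lg\lg(\cdot)+\Theta(1)$ steps, and chain them. The only cosmetic difference is that you work with the log-transformed quantities $M_t=\lg(1/z_t)$ and $L_t=\lg(1/y_t)$ and exploit the rationalized identity $y_{t+1}=y_t^2/(\tfrac12+\sqrt{\tfrac14-y_t^2})$ to get $y_t^2<y_{t+1}<2y_t^2$ in one stroke, whereas the paper iterates the power recursions on $z_t$ and $\zeta_t=2y_t$ directly and uses the inequality $e^{-x}\le 1-x/2$ for the lower bound; the two are quantitatively interchangeable.
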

\begin{proof} 
Let us prove the first property---the sequence is strictly increasing and is in $(0,1/2)$---by induction. Notice that $z_0 = \gamma \in (0,\frac{1}{2})$ by definition. Let $z_t = \frac{1-\zeta}{2}$ for some $0 < \zeta < 1$. Now, $z_{t+1} = \sqrt{(\frac{1-\zeta}{2})(1 - \frac{1-\zeta}{2})} = \sqrt{(\frac{1-\zeta}{2})(\frac{1+\zeta}{2})} = \frac{\sqrt{1-\zeta^2}}{2}$, therefore $0 < z_{t+1} < \frac{1}{2}$. Also,
\begin{equation*}
    0 < \zeta < 1 \implies 2\zeta^2 < 2 \zeta \implies 1 + \zeta^2 - 2 \zeta < 1 - \zeta^2 \implies 1 - \zeta < \sqrt{1 - \zeta^2} \implies z_t < z_{t+1}.
\end{equation*}

We proved above that the sequence $(z_t)$ is increasing. We want to find the number of steps for the sequence to increase from $\gamma$ to $\frac{1}{2} - \epsilon$. We break this into two parts: the number of steps required to reach (i) $\frac{1}{4}$ from $\gamma$ in Lemma~\ref{lm:2hom:1}; (ii) $\frac{1}{2} - \epsilon$ from $\frac{1}{4}$ in Lemma~\ref{lm:2hom:2}.

\begin{lemma}\label{lm:2hom:1}
    Given $z_0 = \gamma < \frac{1}{4}$, $z_t \ge \frac{1}{4}$ for all $t \ge \lg\lg(\frac{1}{\gamma})$ and $z_t < \frac{1}{4}$ for all $t < \lg\lg(\frac{1}{\gamma}) - 1$.
\end{lemma}
\begin{proof}
    We know that $z_{t+1} = \sqrt{z_t (1 - z_t)}$. Let us first prove the upper bound on the time it takes to reach $\frac{1}{4}$. As $z_t \le \frac{1}{2}$ for every $t$, therefore $1 - z_t \ge \frac{1}{2}$, which implies
    \begin{align*}
        z_{t} = \sqrt{z_{t-1} (1-z_{t-1})} \ge \left( \frac{1}{2} \right)^{\frac{1}{2}} z_{t-1}^{\frac{1}{2}} \ge \left( \frac{1}{2} \right)^{\frac{1}{2} + \frac{1}{4}} z_{t-2}^{\frac{1}{4}} \ge 
        \ldots \ge \left( \frac{1}{2} \right)^{\sum_{j = 1}^{t} \frac{1}{2^j}} z_0^{\frac{1}{2^t}} \ge \frac{\gamma^{\frac{1}{2^t}}}{2}.
    \end{align*}
    Now, we want $z_t \ge \frac{1}{4}$. So, if we ensure that $\frac{\gamma^{\frac{1}{2^t}}}{2} \ge \frac{1}{4}$, then we are good.
    \begin{equation*}
        \frac{\gamma^{\frac{1}{2^t}}}{2} \ge \frac{1}{4} \Longleftrightarrow \lg(\gamma) \frac{1}{2^t} - 1 \ge -2 \Longleftrightarrow \lg(\frac{1}{\gamma}) \frac{1}{2^t} \le 1 \Longleftrightarrow t \ge \lg\lg(\frac{1}{\gamma}).
    \end{equation*}

    Let us now focus on the lower bound. As $z_t \ge 0$ for every $t$,
    \begin{equation*}
        z_{t} = \sqrt{z_{t-1} (1-z_{t-1})} \le \sqrt{z_{t-1}} \le z_{t-2}^{\frac{1}{4}} \le \ldots \le z_0^{\frac{1}{2^t}} = \gamma^{\frac{1}{2^t}}. 
    \end{equation*}
    Now, we want $z_t < \frac{1}{4}$. So, if we ensure that $\gamma^{\frac{1}{2^t}} < \frac{1}{4}$, then we are done.
    \begin{equation*}
        \gamma^{\frac{1}{2^t}} < \frac{1}{4} \Longleftrightarrow \lg(\gamma) \frac{1}{2^t} < -2 \Longleftrightarrow \lg(\frac{1}{\gamma}) \frac{1}{2^t} > 2 \Longleftrightarrow t < \lg\lg(\frac{1}{\gamma}) - 1.
    \end{equation*}
\end{proof}
In the next lemma, we bound the number of steps required to reach $\frac{1}{2} - \epsilon$, starting from $\frac{1}{4}$.
\begin{lemma}\label{lm:2hom:2}
    If $z_0 \ge \frac{1}{4}$, then $z_t \ge \frac{1}{2} - \epsilon$ for all $t \ge \lg\lg(\frac{1}{\epsilon})$. On the other hand, if $z_0 \le \frac{1}{4}$, then $z_t < \frac{1}{2} - \epsilon$ for all $t < \lg\lg(\frac{1}{\epsilon}) - 2$.
\end{lemma}
\begin{proof}
    Let us define $\zeta_t = 1 - 2 z_t \Longleftrightarrow z_t = \frac{1-\zeta_t}{2}$. Notice that $0 < \zeta_t < 1$ as $0 < z_t < \frac{1}{2}$. Also, as $z_t$ strictly increases, $\zeta_t$ strictly decreases.

    Let us prove the upper bound on the time it takes to reach $\frac{1}{2} - \epsilon$.
    We want to find the time $t$ such that $\frac{1-\zeta_t}{2} = z_t \ge \frac{1}{2} - \epsilon \Longleftrightarrow \zeta_t \le 2 \epsilon$.
    As $z_{t+1} = \sqrt{z_t (1-z_t)}$, we have 
    \[
        \frac{1-\zeta_{t+1}}{2} = \sqrt{\frac{1-\zeta_t}{2} \left(1 -  \frac{1-\zeta_t}{2} \right)} \Longleftrightarrow 1 - \zeta_{t+1} = \sqrt{1 - \zeta_t^2}.  
    \]
    Now, as $0 < 1 - \zeta_t^2 < 1$ for all $t$, we have $\sqrt{1 - \zeta_t^2} > 1 - \zeta_t^2$ for all $t$, which implies, 
    \[
        1 - \zeta_{t} > 1 - \zeta_{t-1}^2 \implies \zeta_{t} < \zeta_{t-1}^2 < \zeta_{t-2}^4 < \ldots < \zeta_0^{2^t}.
    \]
    If $z_0 \ge \frac{1}{4}$, then $\zeta_0 \le \frac{1}{2}$, so $\zeta_{t} < \zeta_0^{2^t} < \frac{1}{2^{2^t}}$. Now, we want $z_t \ge \frac{1}{2} - \epsilon$, which equivalent to $\zeta_t \le 2 \epsilon$, which is implied by
    \[
        \frac{1}{2^{2^t}} \le 2 \epsilon \Longleftrightarrow -2^t \le 1 + \lg(\epsilon) \Longleftrightarrow 2^{t} \ge \lg(\frac{1}{\epsilon}) - 1 \impliedby t \ge \lg\lg(\frac{1}{\epsilon}).
    \]

    Let us similarly find the lower bound on time taken to reach $\frac{1}{2} - \epsilon$. We want $z_t < \frac{1}{2} - \epsilon \Longleftrightarrow \zeta_t > 2 \epsilon$. As $1 + x \le e^x$ for all $x \in \bR$ and $e^{-x} \le 1 - \frac{x}{2}$ for $x \in [0, 1]$, we have
    \begin{multline*}
        1 - \zeta_{t+1} = \sqrt{1 - \zeta_{t}^2} \le \sqrt{e^{-\zeta_t^2}} = e^{-\frac{\zeta_t^2}{2}} \le 1 - \frac{\zeta_t^2}{4} \implies \zeta_{t+1} \ge \frac{\zeta_t^2}{4} \\
        \implies \zeta_t \ge \frac{\zeta_{t-1}^2}{4} \ge \left( \frac{1}{4} \right)^{1 + 2} \zeta_{t-2}^4 \ge \left( \frac{1}{4} \right)^{1 + 2 + 4} \zeta_{t-3}^8 \ge \ldots \ge \left( \frac{1}{4} \right)^{\sum_{j = 1}^{t} 2^{j-1}} \zeta_0^{2^t} \ge  \left( \frac{\zeta_0}{4} \right)^{2^t}.
    \end{multline*}
    We want $\zeta_t > 2 \epsilon$, which is ensured by
    \begin{multline*}
        \left( \frac{\zeta_0}{4} \right)^{2^t} > 2 \epsilon \Longleftrightarrow 2^t (\lg(\zeta_0) - 2) > 1 + \lg(\epsilon) \impliedby 2^t < \frac{\lg(\frac{1}{\epsilon}) - 1}{2 + \lg(\frac{1}{\zeta_0})} \\
        \impliedby 2^t < \frac{\lg(\frac{1}{\epsilon}) - 1}{2 + 1} \impliedby 2^t < \frac{\lg(\frac{1}{\epsilon})}{4}
        \impliedby t < \lg\lg(\frac{1}{\epsilon}) - 2,
    \end{multline*}
    where we used $\zeta_0 \ge \frac{1}{2}$ to derive $\lg(\frac{1}{\zeta_0}) \le 1$ and used $\epsilon \le \frac{1}{4}$ to derive $\lg(\frac{1}{\epsilon}) - 1 \ge \frac{4}{3}\lg(\frac{1}{\epsilon})$.
\end{proof}
Combining Lemmas~\ref{lm:2hom:1}~and~\ref{lm:2hom:2}, we know that in exactly
$\lg\lg(\frac{1}{\gamma}) + \lg\lg(\frac{1}{\epsilon}) + \tau$ steps, where $\tau \in \{-3, -2, -1, 0\}$, the sequence $z_t$ reaches $\ge \frac{1}{2} - \epsilon$ starting from $\gamma$.
\end{proof}


\begin{theorem}\label{thm:hom2}
Best-response dynamics in lottery contests with two homogeneous agents reaches an $\epsilon$-approximate equilibrium in $\lg\lg(\frac{1}{\epsilon}) + \lg\lg(\frac{1}{\gamma}) + \Theta(1)$ steps, where $\gamma$ is a function of the initial state: $\gamma = \sqrt{x_{0,1}}$ if $0 < x_{0,1} < \frac{1}{4}$, $\gamma = x_{0,1} - \sqrt{x_{0,1}}$ if $\frac{1}{4} \le x_{0,1} < 1$, and $\gamma = \sqrt{a}$ otherwise.
\end{theorem}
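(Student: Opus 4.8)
The plan is to reduce Theorem~\ref{thm:hom2} to the already-established Lemma~\ref{lm:2hom} by showing that the total output in the two-agent dynamics, after one initial move to enter the ``good'' regime, evolves according to the recursion $z_{t+1} = \sqrt{z_t(1-z_t)}$ studied in that lemma, and then translating the condition ``$z_t \ge \tfrac12 - \epsilon$'' into ``$x_t$ is an $\epsilon$-approximate equilibrium.'' By the reduction to $c=1$ established in the preliminaries, I may assume $c = 1$ throughout, so the unique equilibrium is $x_1^* = x_2^* = \tfrac14$ with total output $\tfrac12$.

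\medskip

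\noindent\textbf{Step 1 (Set up the state variable).} Work with $c = 1$. Let $s_t = x_{t,1} + x_{t,2}$ be the total output at time $t$. When agent $i_t$ best-responds to the other agent's output $y := x_{t, -i_t}$, and $0 < y \le 1$, the new total is $s_{t+1} = y + BR(y) = y + \sqrt{y} - y = \sqrt{y}$. I would first argue that after the warm-up, both agents' outputs stay in $(0, \tfrac12)$ (hence each individual output $y$ is the ``other agent's'' output in the next move and lies in the relevant range), so the next-move total is always the square root of the mover's \emph{opponent}'s current output. Writing $y_t$ for the opponent's output at step $t$, one gets $y_{t+1} = s_t - y_t$ ... — more cleanly, I would track the single quantity that is the output of whichever agent is \emph{about to be overwritten}. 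If at time $t$ that output is $w_t$, then $w_{t+1} = BR(w_t) $ is the output just played, wait — I need to be careful: let me instead directly verify that the total output satisfies $s_{t+1} = \sqrt{w}$ where $w$ is the stationary agent's output, and that this stationary output equals $\sqrt{s_{t-1}}\cdot(1 - \sqrt{s_{t-1}})/\sqrt{s_{t-1}}$... The clean claim to nail down is: \emph{for $t$ past the warm-up, $s_{t+1} = \sqrt{s_t - s_t^2}\,$ is false in general}; rather the opponent's output, call it $z_t$, satisfies $z_{t+1} = \sqrt{z_t(1-z_t)}$ exactly. Concretely: if agent $j$ does not move at step $t$, its output is $z_t$; agent $i_t$ plays $\sqrt{z_t} - z_t$; then at step $t+1$ agent $j$ moves, responding to $x_{t+1, i_t} = \sqrt{z_t} - z_t =: z_{t+1}$, and $z_{t+1} = \sqrt{z_t(1-z_t)}$ precisely when $\sqrt{z_t} - z_t = \sqrt{z_t(1-z_t)}$, i.e. $\sqrt{z_t}(1 - \sqrt{z_t}) = \sqrt{z_t}\sqrt{1-z_t}$, i.e. $1 - \sqrt{z_t} = \sqrt{1 - z_t}$ — which is \emph{not} an identity. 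So in fact the right substitution is different: I should track $z_t := $ the output being overwritten, and check $z_{t+1} = BR(z_t) = \sqrt{z_t} - z_t$, which is the recursion $z_{t+1} = \sqrt{z_t} - z_t$, conjugate to Lemma~\ref{lm:2hom}'s via $z \mapsto$ something. The honest version of Step 1 is: identify the exact change of variable (likely $z_t = \sqrt{x_{t, \text{mover}}}$ or $z_t = $ mover's output, possibly $1 - $ total) that turns the dynamics into $z_{t+1} = \sqrt{z_t(1-z_t)}$, then cite Lemma~\ref{lm:2hom}.

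\medskip

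\noindent\textbf{Step 2 (Handle the first move / initialization).} The value $\gamma$ in the theorem statement is defined by cases on $x_{0,1}$, matching exactly the three branches of $BR$: if $0 < x_{0,1} < \tfrac14$ then agent $2$ moves to $\sqrt{x_{0,1}} - x_{0,1}$ and the natural state variable becomes $\sqrt{x_{0,1}}$; if $\tfrac14 \le x_{0,1} < 1$ the relevant starting value is $x_{0,1} - \sqrt{x_{0,1}}$ (note this is $BR(x_{0,1})$ up to sign, i.e. the new output after agent $2$ responds, which lies below $\tfrac14$); and if $x_{0,1} \ge 1$ (or $x_{0,1} = 0$) then agent $2$ plays $a$ or $0$ and within two moves the state variable is $\sqrt{a}$. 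I would simply check each case maps, after at most two best-response moves, into the sequence of Lemma~\ref{lm:2hom} started at $z_0 = \gamma \in (0, \tfrac12)$, absorbing the $\le 2$ extra moves into the $\Theta(1)$ term.

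\medskip

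\noindent\textbf{Step 3 (From $z_t \ge \tfrac12 - \epsilon'$ to $\epsilon$-equilibrium, and back).} Here I relate the total-output condition to the approximate-equilibrium condition. In a symmetric two-agent lottery contest with $c=1$, given opponent output $y$ a player's best utility is $u^*(y) = (\sqrt y - y)/\sqrt y \cdot \tfrac{1}{?}$... concretely $u(BR(y), y) = \frac{\sqrt y - y}{\sqrt y} - (\sqrt y - y) = (1 - \sqrt y)^2$, while the current utility is $u(x_i, y) = \frac{x_i}{x_i + y} - x_i$. So $x$ is an $\epsilon$-approximate equilibrium iff for both players $\frac{x_i}{x_i+y} - x_i \ge (1-\epsilon)(1-\sqrt y)^2$. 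I would show that once $z_t$ (equivalently the total output, equivalently the deficit $\tfrac12 - s_t$) is within $\epsilon'$ of the equilibrium value for a suitable $\epsilon' = \Theta(\epsilon)$ — using that near the symmetric equilibrium the map is smooth and the utility gap is quadratically small in the distance to equilibrium — the profile is an $\epsilon$-approximate equilibrium; conversely an $\epsilon$-approximate equilibrium forces $z_t \ge \tfrac12 - \Theta(\epsilon)$, giving the matching lower bound. Since $\lg\lg(1/\Theta(\epsilon)) = \lg\lg(1/\epsilon) + \Theta(1)$, plugging into Lemma~\ref{lm:2hom}'s bound $\tau = \lg\lg(1/\gamma) + \lg\lg(1/\epsilon') + \Theta(1)$ yields the theorem.

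\medskip

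\noindent\textbf{Main obstacle.} The delicate part is Step~3: converting between the clean scalar condition $z_t \ge \tfrac12 - \epsilon$ and the genuine $\epsilon$-approximate-equilibrium condition, which involves both players' utilities and a multiplicative $(1-\epsilon)$ slack. One must be careful that the quadratic smallness of the utility gap near equilibrium goes the right way (so that $z_t$ close to $\tfrac12$ really does certify approximate equilibrium with only an $\epsilon$, not $\sqrt\epsilon$, loss — here the symmetry and the $2$-agent structure help, since both agents' outputs are pinned down by $s_t$). A secondary nuisance is Step~1: pinning down the \emph{exact} change of variable that reduces the two-agent best-response dynamics to the recursion $z_{t+1} = \sqrt{z_t(1-z_t)}$, and verifying it is a genuine identity (not merely an approximation) for all $t$ past the trivial initial transient — this requires checking that the mover's opponent output stays in $(0,1)$ so the middle branch of $BR$ always applies, which follows from property~(1) of Lemma~\ref{lm:2hom} once the reduction is in place, giving a self-consistent argument.
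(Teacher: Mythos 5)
Your proposal takes the same route as the paper: reduce to Lemma~\ref{lm:2hom} via a change of variable, fold the initialization into $\Theta(1)$ moves, and then translate ``$z_t$ close to $\tfrac12$'' into the $\epsilon$-approximate-equilibrium guarantee. But two things are left hanging. In Step~1 you never actually pin down the change of variable, and some of your exploratory algebra is simply wrong (``$1 - \sqrt{z_t} = \sqrt{1-z_t}$'' is not what you need to check). The correct choice is $z_t = \sqrt{w_t}$, where $w_t$ is the \emph{non-moving} agent's output at step $t$, i.e.\ the input to the best response played at $t$ --- this is exactly the paper's $z_t = \sqrt{x_{t,1}}$ for $t$ even and $\sqrt{x_{t,2}}$ for $t$ odd. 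You already derive $w_{t+1} = BR(w_t) = \sqrt{w_t} - w_t$; squaring immediately gives $z_{t+1}^2 = w_{t+1} = \sqrt{w_t} - w_t = z_t - z_t^2$, i.e.\ $z_{t+1} = \sqrt{z_t(1-z_t)}$, but you never close this loop. (Your final guess $z_t = \sqrt{x_{t,\text{mover}}}$ tracks the same sequence shifted by one time step --- the mover's pre-move output at $t$ is $w_{t-1}$ --- so it would also work, though you do not verify that either.)

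In Step~3 the worry about a $\sqrt\epsilon$ loss is a red herring twice over. The paper's conversion is linear, not quadratic: $z_{t-1} \ge \tfrac12 - \hepsilon$ forces $x_{t,1}, x_{t,2} \in [\tfrac14 - \hepsilon, \tfrac14]$, and then a direct computation gives $u_i(x_t) \ge \tfrac14(1-4\hepsilon)$ while $\max_s u_i(s, x_{t,-i}) = (1-\sqrt{x_{t,-i}})^2 \le \tfrac14(1+9\hepsilon)$, so the ratio exceeds $1-13\hepsilon$; taking $\hepsilon$ a constant fraction of $\epsilon$ finishes. Even if a $\sqrt\epsilon$ loss did appear, it would cost only an additive $O(1)$ inside the double logarithm, so the stated bound would be unaffected. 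The one point you raise that the paper's write-up does gloss over is the converse (that an $\epsilon$-approximate equilibrium forces $z$ within $\Theta(\epsilon)$ of $\tfrac12$), which is what justifies the lower-bound side of the $\Theta$; your instinct to make that explicit is correct and should be carried out if you write the proof in full.
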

Proof of Theorem~\ref{thm:hom2} is provided in the appendix.

\section{Three or More Homogeneous Agents}\label{sec:hom_n_ub}
We now study the case when there are $n \ge 3$ agents. As introduced in Section~\ref{sec:prelim:random}, we consider a randomized model for selecting the agent who makes the transition at any given time point $t$. Agent $i$ makes the transition at time $t$ w.p. (with probability) $w_{t,i}(x_t)$ given action profile at $t$ is $x_t$. We assume $0 < L \le w_{t,i}(x_t) \le U < 1/2$ (the lower bound is not necessary for the agent who played at time $t-1$). An important special case of our model is to assume that $w_{t,i}(x_t) = 1/n$ for all $t$, $i$, and $x_t$, i.e., the agent making the move is selected uniformly at random. We do a worst-case analysis over all $w_{t,i}$ given the parameters $0 < L \le U < 1/2$.




\begin{theorem}\label{thm:hom}
\sloppy
Best-response dynamics in lottery contests with $n \ge 3$ homogeneous agents and randomized selection of agents reaches an $\epsilon$-approximate equilibrium with probability $1-\delta$ in 
$O\left( \frac{1}{1 - 2U} \log\log\left( \frac{1}{\gamma} \right) 
    + \frac{1}{L (1-2U)} \log\left( \frac{n}{\epsilon \delta}\right) 
    + \frac{1}{(1-2U)^2} \log\left( \frac{1}{\delta} \right) \right) $ 
steps for every $\epsilon, \delta \in (0,1)$, where $\gamma$ is a function of the initial state $\gamma = \min( A \cup B \cup \{a\} )$, where $A = \{ x_{0,i} \mid 0 < x_{0,i} < 1, i \in [n] \}$ and $B = \{ \sqrt{(1-\sqrt{y})/2} \mid y \in A \}$. 
As a lower bound, we show that the convergence takes at least 
$\Omega\left( \left(n + \frac{1}{L}\left(1 - \frac{1}{nU}\right) \right) \log(n \delta) + \log\log\left(\frac{1}{\epsilon}\right) + \log\log\left(\frac{1}{\gamma}\right) \right)$ 
steps w.p. $1 - \delta$.
\end{theorem}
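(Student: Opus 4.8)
The plan is to prove the upper bound via the three phases sketched in Section~\ref{sec:results}, and to treat the lower bound with a separate, more elementary argument. Throughout I condition on the profile eventually lying in a ``good'' domain $\cD$ consisting of profiles whose total output $S := \sum_i x_i$ is bounded away from $0$ and in which no coordinate has overshot the threshold $1/c = 1$.

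\textbf{Phase 1 (warm-up).} First I would show that once every agent has made at least one move, the profile has entered $\cD$ and thereafter never leaves it: after agent $i$'s first move, $x_i = BR(x_{-i})$ takes the middle branch of \eqref{eq:br:single} and $S$ is comparable to a constant, and a short case analysis on \eqref{eq:br:single} shows $\cD$ is forward-invariant. Bounding the time until every agent has moved is essentially coupon collecting: each non-last agent is picked with probability at least $L$, so this costs $O\big(\frac{1}{L}\log(n/\delta)\big)$ steps with probability $1-\delta/3$, while the additional $\frac{1}{1-2U}\lg\lg(\tfrac{1}{\gamma})$ term accounts for the doubly-exponential ``ramp-up'' of $S$ from possibly tiny initial values (exactly the $\lg\lg$ phenomenon of Lemma~\ref{lm:2hom} for reaching $\tfrac14$), with the $\frac{1}{1-2U}$ factor absorbing the steps on which the last mover is re-selected or overshoots and $S$ fails to grow.

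\textbf{Phase 2 (total output large often).} I would discretize the range of $S_t$ into geometric buckets $I_k = [\theta^k,\theta^{k+1})$ and study the integer process $K_t$ recording which bucket contains $S_t$. The structural facts to establish about the dynamics inside $\cD$ are: (a) $S_{t+1} \ge \Omega(1)\cdot S_t$, i.e.\ the total never drops by more than a constant factor in one step (the ``interesting insight'' mentioned in the introduction, read off from \eqref{eq:br:single}); and (b) conditioned on a non-last agent moving --- an event of probability at least $1-U > \tfrac12$ --- the total moves toward the equilibrium total $\tfrac{n-1}{n}$, which gives $K_t$ a positive drift of order $1-2U$ whenever it sits below the equilibrium bucket. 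Coupling $K_t$ with a biased birth--death chain with drift $\Theta(1-2U)$ and applying an Azuma/Chernoff bound, I would conclude that in any window of $\Omega\big(\frac{1}{1-2U}m + \frac{1}{(1-2U)^2}\log(1/\delta)\big)$ steps at least $m$ of them have $S_t \ge \Omega(1)$, with probability $1-\delta/3$.

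\textbf{Phase 3 (potential descent) and conversion.} Next I would verify the two analytic properties of $f$ from \eqref{eq:potential}: restricted to a slice $\{x_{-i}\text{ fixed}\}$, $f$ is a one-variable cubic whose curvature at its minimizer is $\Theta(S)$, so when $S \ge \Omega(1)$ the coordinate-wise strong-convexity and smoothness constants are both $\Theta(1)$ with absolute-constant ratio. Since a best response of agent $i$ is \emph{exactly} exact coordinate minimization of $f$ (the best-response potential property, cf.\ \citet{ewerhart2017lottery}), a standard randomized coordinate-descent estimate gives $\Exp[f(x_{t+1}) - f^*] \le (1-\Theta(L))\,(f(x_t)-f^*)$ on every step with $S_t \ge \Omega(1)$, and $f(x_{t+1}) \le f(x_t)$ always. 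Multiplying over the $\Theta\big(\frac{1}{L}\log(n/(\epsilon\delta))\big)$ ``good'' steps supplied by Phases~1--2 and using a supermartingale tail bound drives $f(x_t)-f^*$ below $\mathrm{poly}(\epsilon/n)$ with probability $1-\delta/3$; a final one-variable cubic estimate relating $f(x)-f^*$ to each agent's utility deficit $u_i(BR(x_{-i}),x_{-i})-u_i(x)$ converts this into an $\epsilon$-approximate equilibrium, and a union bound over the three phases yields the $1-\delta$ guarantee with the stated step count.

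\textbf{Lower bound and main obstacle.} For the lower bound I would (i) invoke the coupon-collector lower bound: with probability at least $\delta$ some agent has not moved within $\Omega\big((n+\frac1L(1-\frac{1}{nU}))\log(n\delta)\big)$ steps, and any profile in which an agent has never moved is $\Omega(1)$-far from equilibrium; and (ii) choose the initial profile with all coordinates equal to $\gamma$ (or $a$), so that regardless of the selection the total obeys $S_{t+1}\le\sqrt{\Theta(S_t)}$ until $S_t=\Theta(1)$, forcing $\Omega(\lg\lg(1/\gamma))$ steps to build the total up and then $\Omega(\lg\lg(1/\epsilon))$ more to shrink the last $\epsilon$ of the gap, exactly as in Lemma~\ref{lm:2hom}/Theorem~\ref{thm:hom2}. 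These three terms bottleneck disjoint resources, so they add. I expect Phase~2 to be the crux: the bucket process is driven by an adversarially chosen, state-dependent family $w_{t,i}$, $S_t$ is genuinely non-monotone, and we need a high-probability (not merely in-expectation) statement compatible with the multiplicative descent of Phase~3; isolating the $\frac{1}{1-2U}$ and $\frac{1}{(1-2U)^2}$ dependence from the coupling and aligning the ``often large'' windows with the coordinate-descent count is where the real work lies, the facts about $f$ and the coupon-collector arguments being comparatively routine.
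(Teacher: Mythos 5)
Your plan reproduces the paper's architecture (coupon-collector warm-up, a bucketized random-walk coupling to show the total is large often, multiplicative descent of the potential \eqref{eq:potential} on those steps, union bound; coupon collector plus the $\lg\lg$ ramp-up for the lower bound), but there is a genuine gap in your Phase 3. You derive the per-step contraction $\Exp[f(x_{t+1})-f^*]\le(1-\Theta(L))(f(x_t)-f^*)$ from ``coordinate-wise strong convexity and smoothness both $\Theta(1)$'' via ``a standard randomized coordinate-descent estimate.'' No standard estimate gives a linear rate from coordinate-wise curvature alone: one needs global strong convexity (or a PL-type inequality) on the region visited, and $f$ is not even convex there. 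Its Hessian has the form $(2\sigma-1)$ plus the identity ($\sigma$ the total output), with smallest eigenvalue $2n\sigma-(n-1)$, which is negative whenever $\sigma<\frac{n-1}{2n}$; since $\frac{n-1}{2n}\ge\frac13>\frac14$ for $n\ge3$, this non-convex band $[\frac14,\frac{n-1}{2n})$ lies exactly inside the regime ``$S_t\ge\Omega(1)$'' that your Phase 2 delivers. The paper's Lemma~\ref{lm:part3} therefore splits the argument: for $\frac14\le s_t\le\frac{5(n-1)}{6n}$ it shows a \emph{constant} expected decrement is available because some agent's best response multiplies the total by at least $\sqrt{6/5}$ while $f\le\frac12$; only for $s_t\ge\frac{5(n-1)}{6n}$ does a strong-convexity argument apply, and even there it works in the transformed variables $y_i=s_t-x_{t,i}$ and needs the companion smoothness upper bound of Lemma~\ref{lm:potential}~\eqref{lm:potential:5} to relate the per-coordinate decrements back to $f(x_t)$. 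Without an argument playing this role, your claimed contraction is unsupported precisely where it is needed.

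Two smaller repairs. First, your Phase-2 fact (a), $S_{t+1}\ge\Omega(1)\,S_t$, is not ``read off'' from \eqref{eq:br:single}: if the mover held almost the entire total, the total would collapse. It holds after the warm-up only because the \emph{previous} mover just best responded, hence holds $s_t(1-s_t)\ge s_t/2$ of the total when $s_t\le\frac12$ (this is the content of the paper's computation leading to $s_{t+1}/s_t\ge\sqrt{3}/2$, which explicitly uses $i_{t-1}\neq i_t$); the same structure is what makes your fact (b) true, and the paper's cleaner route for the left-drift is simply that at most one agent can hold more than half the total and any fixed agent is selected with probability at most $U$. Second, in the lower bound, ``any profile in which an agent has never moved is $\Omega(1)$-far from equilibrium'' is false for arbitrary starts (the initial profile could already be the equilibrium); you must fix an adversarial start, e.g.\ every agent at output $1$, for which an unmoved agent still plays $1$ and no approximate equilibrium can contain an agent playing $1$. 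Finally, note that your geometric buckets give only $\log(1/\gamma)$ for the ramp-up; to obtain $\log\log(1/\gamma)$ the ramp-up must be tracked on the doubly-exponential scale (as in the paper's interval partition), which your Phase-1 $\lg\lg$ remark implicitly presupposes but your Phase-2 discretization does not provide.
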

Notice that the above theorem implies, when agents are selected uniformly at random at each time step (i.e., $L = U = 1/n$), the following corollary.
\begin{corollary}\label{thm:hom:unif}
Best-response dynamics in lottery contests with $n \ge 3$ homogeneous agents and uniform selection of agents reaches an $\epsilon$-approximate equilibrium w.p. $1-\delta$ in 
$O\left( \log\log\left( \frac{1}{\gamma} \right) + n \log\left( \frac{n}{\epsilon \delta}\right) \right)$ 
steps for every $\epsilon, \delta \in (0,1)$, where $\gamma$ is as defined in Theorem~\ref{thm:hom}. 
Also, the convergence takes at least 
$\Omega\left( n \log(n \delta) + \log\log\left(\frac{1}{\epsilon}\right) + \log\log\left(\frac{1}{\gamma}\right) \right)$ 
steps w.p. $1 - \delta$.
\end{corollary}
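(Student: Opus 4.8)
I would prove the upper and lower bounds separately. For the upper bound I follow the three-phase decomposition sketched in Section~\ref{sec:results}; throughout I normalise $c=1$, so the equilibrium has total output $1-\tfrac1n$ and $x_i^\ast=\tfrac{n-1}{n^2}$, and each agent's best response is the square-root map $z\mapsto\sqrt{z}-z$ of the others' aggregate. The first step is to fix a domain $\cD$ --- profiles whose total output is bounded by a constant and bounded away from both $0$ and the threshold $1$ at which best responses collapse to $0$ --- and to run a \emph{warm-up} analysis showing that the dynamics enters $\cD$ and never leaves. Bounding the warm-up length combines (i) a coupon-collector estimate for the number of steps until every agent has moved at least once, under selection weights with $w_{t,i}\le U$ and $w_{t,i}\ge L$ for inactive agents, which costs $O(\tfrac1L\log(n/\delta))$ and is absorbed into the second term of the bound, and (ii) a one-dimensional square-root recursion of exactly the form of Lemma~\ref{lm:2hom:1}, which needs $\log\log(1/\gamma)$ ``levels'' to lift the relevant coordinate from scale $\gamma$ to a constant, each level costing $\Theta(1/(1-2U))$ actual steps because the moving agent is selected with probability in $[L,U]$ with $U<\tfrac12$. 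The set $B$ in the definition of $\gamma$ accounts for the worst-case one-step image of the initial profile before this recursion is entered.

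The second phase shows that the total output $S_t=\sum_i x_{t,i}$ is ``large'' --- within a constant factor of $1-\tfrac1n$ --- on a $\Theta(1-2U)$ fraction of the steps after warm-up. The ingredients are the structural fact (to be established separately) that after warm-up $S_t$ never decreases by more than a fixed multiplicative factor in a single step, and a coupling argument: I discretise the range of $\lg S_t$ into unit-length intervals, consider the induced state process, and couple it with a biased birth--death chain having strictly positive drift toward the large-$S$ states whenever $S$ is small --- this is exactly where the assumption $U<\tfrac12$ (Lemma~\ref{lm:markov}) is used, since it guarantees the drift. Standard occupation-time and hitting-time estimates for such chains, plus Azuma--Hoeffding, then give that over any window of $\Theta\!\big(\tfrac{1}{L(1-2U)}\log(n/(\epsilon\delta))+\tfrac{1}{(1-2U)^2}\log(1/\delta)\big)$ steps the process spends at least a $\Theta(1-2U)$ fraction of its time in the large-$S$ states except with probability $O(\delta)$; the $\tfrac{1}{(1-2U)^2}\log(1/\delta)$ term is the price of the concentration estimate for this random-walk-like quantity.

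In the third phase I use the potential $f$ of \eqref{eq:potential}: I would show that on the large-$S$ region $f$ is $\mu$-strongly convex and $\ell$-smooth with $\mu,\ell=\Theta(1)$, and that a best-response move of agent $i$ is precisely the minimiser of $f$ over the $i$-th coordinate. By the standard coordinate-descent argument for strongly convex smooth functions, conditioned on $x_t$ lying in the large-$S$ region and on an inactive agent being selected, $\Exp[f(x_{t+1})-f^\ast\mid x_t]\le(1-\Theta(L))(f(x_t)-f^\ast)$, the factor $L$ reflecting the selection probability; and since a best-response move never increases $f$, the potential is non-increasing on \emph{every} step. Combining with the Phase-2 guarantee that a $\Theta(1-2U)$ fraction of steps are ``good'' shows $f(x_t)-f^\ast$ contracts at rate $\Theta(L(1-2U))$ per step, so $O\!\big(\tfrac{1}{L(1-2U)}\log(n/(\epsilon\delta))\big)$ good steps drive it below $\mathrm{poly}(\epsilon/n)$, which I would show implies an $\epsilon$-approximate equilibrium. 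Adding the three phase lengths and a union bound over the $O(\log(1/\delta))$ failure events yields the stated upper bound. For the lower bound, the $\log\log(1/\epsilon)+\log\log(1/\gamma)$ terms are forced by the same square-root recursion as in Theorem~\ref{thm:hom2}: even under the most favourable selection, closing the gap of $S_t$ (or a coordinate) to its equilibrium value from scale $\gamma$ down to $\epsilon$ needs that many moves because each move only square-roots the gap; and the $\big(n+\tfrac1L(1-\tfrac1{nU})\big)\log(n\delta)$ term is a selection-process lower bound obtained from an initial profile in which a designated agent is far from $x_i^\ast$, so that equilibrium is impossible until that agent moves, combined with a Chernoff/coupon-collector estimate of how long any admissible weights delay that move.

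The main obstacle is Phase~2 --- quantifying ``$S_t$ is large often'' through the coupled Markov chain --- because the process on the discretised $\lg S_t$ values is only approximately a random walk: the step sizes, the drift, the boundary behaviour near the collapse threshold $1$, and the countably infinite state space all have to be controlled simultaneously to extract the exact dependence on $L$, $U$, $n$, $\epsilon$, $\delta$, and in particular the $\tfrac{1}{(1-2U)^2}\log(1/\delta)$ contribution. A secondary difficulty is establishing strong convexity and smoothness of $f$ \emph{only} on the large-$S$ region and checking that the coordinate-descent contraction is not spoiled by the occasional steps the dynamics spends outside it, for which one only knows that $f$ does not increase.
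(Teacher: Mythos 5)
Your plan reconstructs the proof of Theorem~\ref{thm:hom} and then specializes it; note that the paper obtains the corollary by nothing more than substituting $L=U=1/n$ into Theorem~\ref{thm:hom}, and your three-phase decomposition (coupon-collector warm-up, a drift argument showing the total output is large often, and a strong-convexity/smoothness contraction of the potential $f$, plus the ``every agent must move'' lower bound) is the same skeleton as the paper's proof of that theorem. The genuine gap is in how you obtain the $\log\log(1/\gamma)$ dependence. You charge it to the warm-up phase as ``a one-dimensional square-root recursion of exactly the form of Lemma~\ref{lm:2hom:1}, each level costing $\Theta(1/(1-2U))$ steps''. That recursion is monotone only in the two-agent alternating dynamics; for $n\ge 3$ with randomized selection, whenever the agent holding more than half of the current total is chosen (probability up to $U$ at \emph{every} step) the total output drops, so the climb from $\gamma$ to a constant is not a deterministic recursion inflated by a selection factor --- it is itself a biased random walk. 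The paper handles exactly this by folding the climb into the Markov-chain part (your Phase~2), using intervals whose endpoints shrink \emph{doubly} exponentially, so that a good move ($s_{t+1}\ge\sqrt{s_t/2}$, probability at least $1-U$) advances one state, a bad move retreats at most one state (this is where the post-warm-up bound $s_{t+1}\ge \frac{\sqrt{3}}{2}s_t$ is used), and the initial state is only $1+\lg\lg(1/\gamma)$ (Lemmas~\ref{lm:outputlb}, \ref{lm:markov}, \ref{lm:part2}). Your Phase~2 instead discretizes $\lg S_t$ into unit intervals; that coupling can be made to work for the occupation-frequency claim, but its initial state is $\Theta(\log(1/\gamma))$, so if the climb is charged there you only get $\log(1/\gamma)$, and if it is charged to your warm-up recursion the step is unjustified. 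To recover the $\log\log(1/\gamma)$ claimed in the corollary you must run the drift argument on the $\lg\lg(1/S_t)$ scale (equivalently, use the paper's doubly-exponential partition), combining it with the bounded-decrease fact you already list.

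Two smaller points. First, $f$ is strongly convex only when the total output is at least $\frac{3(n-1)}{4n}$ (Lemma~\ref{lm:potential}); on the band where $s_t\ge\frac14$ but the total is below roughly $\frac{5(n-1)}{6n}$ the paper needs a separate argument (there $f$ is bounded below by a constant and some agent's best response raises the total by a constant factor, yielding a constant decrease), so ``strongly convex and smooth on the large-$S$ region'' cannot be taken literally for the region $\{s_t\ge 1/4\}$ that Phase~2 delivers. Second, in the lower bound a single designated agent who must move costs only $\Theta(n\log(1/\delta))$ under uniform selection; to get the $n\log(n\delta)$ term you need an initial profile (e.g.\ all outputs equal to $1$) from which \emph{every} agent must move, so that the full coupon-collector lower bound of Lemma~\ref{lm:coupon} applies.
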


\begin{proof}[Proof of Theorem~\ref{thm:hom}]
We shall use an extension of the well-known coupon collector problem in the first part of our analysis. In the coupon collector problem, we select a coupon out of $n$ coupons uniformly at random at each time step, and we want to bound the time it takes to collect all coupons. A tight $\Theta(n \log n)$ high probability bound is known for this problem (see, e.g., \cite{mitzenmacher2017probability}). We use this result to derive the following bounds on the time it takes for each agent to play at least once in the best-response dynamics; the proof is in the appendix.

\begin{lemma}\label{lm:coupon}
    Let $T$ be the time it takes for all agents to play at least once in the best-response dynamics, we have the following high probability bounds: (i) upper bound, $\Prob[T \le \frac{1}{L}\ln (\frac{n}{\delta})] \ge 1 - \delta$; (ii) lower bound, $\Prob[T \ge \max(\frac{1}{L}(1 - \frac{1}{nU}), n) \log(n \delta)] \ge 1 - \delta$.
\end{lemma}

Observe that, to reach an approximate equilibrium, every agent must make at least one best-response move. For example, it is easy to check that there can be no approximate equilibrium with any agent producing an output of $1$: if an agent is playing $1$, then everyone else must play $0$; if everyone else is playing $0$, then this agent would want to deviate and not play $1$. So, if every agent starts with an output of $1$, then each agent must make at least one move to reach an approximate equilibrium. Using Lemma~\ref{lm:coupon}, we get the lower bound of $\Omega( (n + \frac{1}{L}(1 - \frac{1}{nU})) \log(n \delta) )$. Moreover, from Theorem~\ref{thm:hom2}, we have a $\Omega(\log(\log(\frac{1}{\epsilon})\log(\frac{1}{\gamma})))$ lower bound for two agents, which automatically applies to $n \ge 3$ agents. 


The rest of the proof is dedicated to proving a high probability upper bound.
Keeping the same notation as our previous discussions, let $x_{t,i}$ be the output of agent $i \in [n]$ at time $t \ge 0$. Note that $x_{t,i}$ is a random variable and $(x_t)_{t \ge 0}$ is a stochastic process because the agent $i_t$ making the move at time $t$ is selected randomly. Let $s_t = \sum_{i} x_{t,i}$ be the total output produced by the agents at time $t$; $s_t$ is also stochastic.

We call an initial time period of the best response dynamics the \textit{warm-up} phase (Definition~\ref{def:warmup}).
When the warm-up phase ends, we ensure that the output profile is, and stays thereafter, in a certain well-behaved region.
\begin{definition}[Warm-Up Phase]\label{def:warmup}
    The time period $[0,T_{warm}) = \{0, 1, \ldots, T_{warm}-1\}$ denotes the warm-up phase, where $T_{warm}$ is the smallest time such that for every $t \ge T_{warm}$:
    \begin{enumerate}
        \item \label{def:warmup:1} all agents produce output of at most $1/4$: $0 \le x_{t,i} \le 1/4$ for every $i$;
        \item \label{def:warmup:2} there are at least two agents $i$ and $j \neq i$ with positive output: $x_{t,i} > 0$ and $x_{t,j} > 0$;
        \item \label{def:warmup:3} the total output is strictly less than $1$: $s_t < 1$.
    \end{enumerate}
\end{definition}

There are three main parts of the analysis. In the first part, we bound the time it takes for the warm-up phase to finish with high probability. The next two parts assume the completion of the warm-up phase. In the second part, we bound the time it takes for the total output $s_t$ to be at least $1/4$ for a sufficiently large number of time steps. We do it by discretizing the space of $s_t$ and analyzing the Markov chain associated with it. Finally, in the third part of the analysis, we show that, if $s_t \ge 1/4$ for a sufficiently large number of steps, then the output profile is close to the equilibrium output profile with high probability. Here, we use the properties of the potential function $f$. All omitted proofs are in the appendix.

\paragraph{\textbf{Analysis Part 1 (Warm-Up Phase).}}
We shall use the following properties (Lemma~\ref{lm:prop1}) of best-response dynamics in our subsequent analysis.
\begin{lemma}\label{lm:prop1}
    The best-response dynamics satisfies the following properties:
    \begin{enumerate}
        \item \label{lm:prop1:1} $s_t > 0$ for every $t \ge 1$.
        \item \label{lm:prop1:2} $s_t < 1 \implies s_{t+1} < 1$ for every $t \ge 0$.
        \item \label{lm:prop1:3} For $i,j \in [n]$ such that $i \neq j$, if $x_{t,i} > 0$, $x_{t,j} > 0$, and $s_t < 1$, then $x_{t+1,i} > 0$ and $x_{t+1,j} > 0$ for every $t \ge 0$.
    \end{enumerate}
\end{lemma}

The next lemma gives a high probability bound on the time taken by the warm-up phase.
\begin{lemma}\label{lm:part1}
    Time till completion of the warm-up phase $T_{warm} = O(\frac{1}{L}\log(\frac{n}{\delta}))$ w.p. $1 - \delta$.
\end{lemma}

In Lemma~\ref{lm:part1}, we proved that the warm-up phase finishes with high probability in about $O(\frac{1}{L}\log n)$ steps. In the next lemma (Lemma~\ref{lm:outputlb}), we lower bound the total output when the warm-up phase finishes, i.e., at time $T_{warm}$, as a function of the initial profile $x_0$. This bound will be used in our subsequent analysis.

\begin{lemma}\label{lm:outputlb}
$s_{T_{warm}} \ge \gamma = \min( A \cup B \cup \{a\} )$, where $A = \{ x_{0,i} \mid 0 < x_{0,i} < 1, i \in [n] \}$ and $B = \{ \sqrt{(1-\sqrt{y})/2} \mid y \in A \}$. 
\end{lemma}

This completes the first part of our analysis. In the second part, we bound the time it takes to have $s_t$ above $1/4$ for a large number of time steps with high probability, given we already have completed the warm-up phase. We do it using the Markov chain below.

\paragraph{\textbf{A biased random walk with a wall.}} Let us consider a Markov chain with countably infinite states denoted by positive integers $\bZ_{> 0}$. Let $y_t \in \bZ_{> 0}$ be a random variable that denotes the state of this Markov chain at time $t \ge 0$. The state moves left (or stays at $1$ if it is already at $1$) w.p. $(1-p)$ and moves right w.p. $p$, where $p < 1/2$. Formally, the transition function of the chain is: $\Prob[y_{t+1} = 1 \mid y_{t} = 1] = 1 - p$, $\Prob[y_{t+1} = 2 \mid y_{t} = 1] = p$, $\Prob[y_{t+1} = i - 1 \mid y_{t} = i] = 1 - p$, and $\Prob[y_{t+1} = i+1 \mid y_{t} = i] = p$ for $i \in \bZ_{\ge 2}$. Figure~\ref{fig:markov} shows the Markov chain.
\begin{figure}[htbp]
\centering
\resizebox{!}{64pt}{\begin{tikzpicture}[
    start chain=main going right,
    state/.style={circle,minimum size=9mm,draw},
      node distance=10mm,
      font=\scriptsize,
      auto
    ]
    
    \node [state, on chain, fill=white, text=black] (A) {$1$};
    \node[state, on chain, fill=white, text=black]  (B) {$2$};
    \node[state, on chain, fill=white, text=black]  (C) {$3$};
    \node[state, on chain, fill=white, text=black]  (D) {$4$};
    \node[on chain, circle, minimum size=9mm, fill=white, text=black]  (E) {$\ldots$};

    \path[->, draw]
    (A) edge[loop left] node{$1-p$} (A);
    
    \path[->,draw, bend right=30]
    (A) edge node[below] {$p$} (B);
    
    \path[->,draw, bend right=30]
    (B) edge node[above] {$1 - p$} (A);

    \path[->,draw, bend right=30]
    (B) edge node[below] {$p$} (C);
    
     \path[->,draw, bend right=30]
    (C) edge node[above] {$1-p$} (B);

    \path[->,draw, bend right=30]
    (C) edge node[below] {$p$} (D);
    
     \path[->,draw, bend right=30]
    (D) edge node[above] {$1-p$} (C);

    \path[->,draw, dotted, bend right=30]
    (D) edge node[below] {$p$} (E);
    
     \path[->,draw, dotted, bend right=30]
    (E) edge node[above] {$1-p$} (D);

\end{tikzpicture}}
\caption{Markov chain}
\label{fig:markov}
\end{figure}
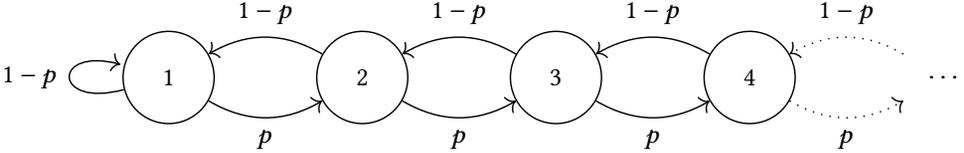
We bound the time it takes for the stochastic process $y_t$ to visit state $1$ sufficiently large number of times with high probability, which we do using suitable concentration bounds in Lemma~\ref{lm:markov}.
\begin{lemma}\label{lm:markov}
    Starting from initial state $y_0 = k$, the Markov chain visits state $1$ at least $m$ times w.p. at least $1-\delta$ after $\frac{4}{1-2p} \max\left( m + k, \frac{1}{1-2p} \ln\left( \frac{1}{\delta} \right) \right)$ time steps.
\end{lemma}

\paragraph{\textbf{Analysis Part 2 (Total Output).}}
We now use the Markov chain defined earlier to bound the time it takes after the warm-up phase for $s_t$ to be above $\frac{1}{4}$ for a large number of time steps with high probability.

\begin{lemma}\label{lm:part2}
Within $T = T_{warm} + O\left( \frac{1}{1-2U}m + \frac{1}{1-2U} \log\log\left( \frac{1}{\gamma} \right) + \frac{1}{(1-2U)^2} \log\left( \frac{1}{\delta} \right) \right)$ time steps, w.p. $1-\delta$,
there must have been at least $m$ times steps where $s_t \ge \frac{1}{4}$,
where $\gamma$ is the lower bound on $s_{T_{warm}}$ as stated in Lemma~\ref{lm:outputlb}; formally, $\left| \left\{ t \mid T_{warm} \le t \le T, s_t \ge \frac{1}{4} \right\} \right| \ge m$ w.p. $1-\delta$.
\end{lemma}
\begin{proof} 
In this proof, we assume that the warm-up phase has completed and all conditions valid after the completion of the warm-up phase are satisfied. To make the analysis less cluttered, let us assume that $T_{warm} = 0$, i.e., $x_t$ satisfies all conditions for completion of the warm-up phase for all $t \ge 0$ (rather than $t \ge T_{warm}$). 


We map the evolution of $s_t$ to a discrete stochastic process $z_t$ that takes values in $\bZ_{>0}$:
\begin{itemize}
    \item We partition the unit interval $(0,1)$, range of $s_t$, into the following intervals (written in decreasing order, starting closer to $1$ and going to $0$):
    \[
        \left[\frac{1}{4}, 1\right), \left[\frac{1}{8}, \frac{1}{4} \right), \left[\frac{1}{32}, \frac{1}{8} \right), \ldots,  \left[\left(\frac{1}{2}\right)^{2^{\ell-1}+1}, \left(\frac{1}{2}\right)^{2^{\ell-2}+1} \right), \ldots,
    \]
    where the $\ell$-th interval for $\ell \ge 2$ is equal to $[(1/2)^{2^{\ell-1}+1}, (1/2)^{2^{\ell-2}+1} )$. Notice that the values in the intervals are decreasing at a double exponential rate.
    \item The state of the stochastic process $z_t$ is equal to $\ell$ if $s_t$ lies in the $\ell$-th interval. Formally,
    \[
        z_t = 
        \begin{cases}
            1, &\text{if $s_t \in [1/4, 1)$},\\
            \ell, &\text{if $s_t \in \left[(1/2)^{2^{\ell-1}+1}, (1/2)^{2^{\ell-2}+1} \right)$ where $\ell \ge 2$}.
        \end{cases}
    \]
\end{itemize}
We shall prove that the $z_t$ has a higher tendency to move left than $y_t$, where $y_t$ follows the Markov chain defined earlier (Figure~\ref{fig:markov}), given we set $p = U$ for $y_t$.
Remember, $y_t$ moves one step right ($y_{t+1} = y_t + 1$) w.p. $p = U$ and one step left ($y_{t+1} = \max(1, y_t - 1)$) w.p. $1-p = 1-U$. We shall show that $z_t$ moves at least one step left (or stays at $1$) w.p. at least $1-p$ and moves at most one step right w.p. at most $p$.
If we set $z_0 = y_0$, then this result ensures that the probability that $z_t$ visits state $1$ at least $m$ times is at least the probability that $y_t$ visits state $1$ at least $m$ times.


\textit{Left transitions.} We know that $s_{t+1} = \sqrt{s_t - x_{t,i}}$, where $i$ is selected from $[n]$ randomly. As $\sum_i x_{t,i} = s_t$, there can be at most one agent $j$ such that $x_{t,j} > s_t/2$. This agent $j$ is selected w.p. at most $U$. So, w.p. at least $1-U$ we select an agent $i$ such that $x_{t,i} \le s_t/2 \implies s_t - x_{t,i} \ge s_t/2 \implies s_{t+1} = \sqrt{s_t - x_{t,i}} \ge \sqrt{s_t/2}$.
\begin{itemize}
    \item If $z_t = 1$, then $s_t \ge 1/4$. So, $s_{t+1} \ge \sqrt{s_t/2} \ge \sqrt{1/8} \ge 1/4$ w.p. at least $1-U = 1-p$. So, $z_{t+1} = 1$ w.p. at least $1-p$.
    
    \item If $z_t = \ell$ for some $\ell \ge 2$, then $s_t \ge (1/2)^{2^{\ell-1}+1}$. So, $s_{t+1} \ge \sqrt{s_t/2} \ge \sqrt{(1/2)^{2^{\ell-1}+1}/2} = \sqrt{(1/2)^{2^{\ell-1}+2}} = (1/2)^{2^{\ell-2}+1}$ w.p. at least $1-p$. So, $z_{t+1} \le \ell-1$ w.p. at least $1-p$.
\end{itemize}

\textit{Right transitions.} We showed above that $z_t$ makes a left transition w.p. at least $1-p$. So, $z_t$ stays at its position or makes a right transition w.p. at most $p$, which is what we want. But we still need to show that $z_t$ can move at most one step right, i.e., $z_{t+1} \le z_t + 1$ w.p. $1$.

Let us focus on the transition at time $t$: $s_t \rightarrow s_{t+1}$, $z_t \rightarrow z_{t+1}$. Say agent $i = i_t$ makes this transition. 
If $i$ made the best-response move at time $t-1$ as well, i.e., $i_{t-1} = i$, then $s_{t+1} = s_t$ because consecutive moves are redundant. Therefore, $z_{t+1} = z_t \le z_t + 1$, as required.

Let us assume that $i_{t-1} \neq i$. Let $j = i_{t-1} \neq i$ be the agent that made the move at time $t-1$. As $i_{t-1} \neq i$, therefore $x_{t-1,i} = x_{t,i}$. Similarly, as $i_t \neq j$, therefore $x_{t,j} = x_{t+1,j}$. On the other hand, for $k \neq i,j$, notice that $x_{t-1, k} = x_{t, k} = x_{t+1, k}$. 

Let $\alpha = \sum_{k \neq i, j} x_{t,k}$. We can write $s_t - x_{t,i}$ using $\alpha$ as
\begin{equation}\label{eq:nagents:unif:1}
    s_t - x_{t,i} = x_{t,j} + \sum_{k \neq i,j} x_{t,k} = x_{t,j} + \alpha.
\end{equation}
We can also write $x_{t,j}$ using $\alpha$ as
\begin{align}\label{eq:nagents:unif:2}
    x_{t,j} &= \sqrt{\sum_{k \neq j} x_{t-1,k}} - \sum_{k \neq j} x_{t-1,k} = \sqrt{x_{t-1,i} + \alpha} - (x_{t-1,i} + \alpha) = \sqrt{x_{t,i} + \alpha} - (x_{t,i} + \alpha),
\end{align}
and $s_t$ using $\alpha$ as
\begin{equation}\label{eq:nagents:unif:3}
    s_t = \sqrt{s_{t-1} - x_{t-1,j}} = \sqrt{\sum_{k \neq j} x_{t-1,k}} = \sqrt{x_{t-1,i} + \alpha} = \sqrt{x_{t,i} + \alpha}.
\end{equation}
Combining \eqref{eq:nagents:unif:1} and \eqref{eq:nagents:unif:2}, we get
\begin{equation}\label{eq:nagents:unif:4}
    s_t - x_{t,i} = x_{t,j} + \alpha = \sqrt{x_{t,i} + \alpha} - (x_{t,i} + \alpha) + \alpha = \sqrt{x_{t,i} + \alpha} - x_{t,i}.
\end{equation}
We shall now lower bound the ratio between $s_{t+1}$ and $s_t$. From \eqref{eq:nagents:unif:3} and \eqref{eq:nagents:unif:4}, we have
\begin{equation*}
    \frac{s_{t+1}}{s_t} = \frac{\sqrt{s_{t} - x_{t,i}}}{s_t} = \frac{\sqrt{\sqrt{x_{t,i} + \alpha} - x_{t,i}}}{\sqrt{x_{t,i} + \alpha}} = \sqrt{\frac{1}{\sqrt{x_{t,i} + \alpha}} - \frac{x_{t,i}}{x_{t,i} + \alpha}}.
\end{equation*}
Note that $x_{t,i} + \alpha = s_t < 1$, so $\alpha < 1 - x_{t,i}$.
Let us define the function $g(y) = \frac{1}{\sqrt{x_{t,i} + y}} - \frac{x_{t,i}}{x_{t,i} + y}$ for $y \in [0, 1 - x_{t,i}]$. Notice that $ \frac{s_{t+1}}{s_t} = \sqrt{g(\alpha)} \ge \min_{y \in [0, 1 - x_{t,i}]} \sqrt{g(y)}$.
Let us differentiate $g(y)$ w.r.t. $y$, we get
\[
    g'(y) = \frac{-1}{2 (x_{t,i} + y)^{3/2}} + \frac{x_{t,i}}{(x_{t,i} + y)^2} = \frac{2 x_{t,i} - \sqrt{x_{t,i} + y}}{2 (x_{t,i} + y)^2}.
\]
As $x_{t,i} \le 1/4$, we have
\[
    x_{t,i} \le \frac{1}{4} \implies \sqrt{x_{t,i}} \le \frac{1}{2} \implies 2 x_{t,i} \le \sqrt{x_{t,i}} \implies 2 x_{t,i} - \sqrt{x_{t,i}} \le 0 \implies 2 x_{t,i} - \sqrt{x_{t,i} + y} \le 0,
\]
for every $y \ge 0$. Therefore, $g'(y) \le 0$ for very $y \ge 0$. So,
\[
    \min_{y \in [0, 1 - x_{t,i}]} g(y) = g(1 - x_{t,i}) = \frac{1}{\sqrt{x_{t,i} + (1 - x_{t,i})}} - \frac{x_{t,i}}{x_{t,i} + (1 - x_{t,i})} = 1 - x_{t,i} \ge \frac{3}{4},  
\]
as $x_{t,i} \le 1/4$. As $\sqrt{\cdot}$ is a monotone increasing function, so
\[
    \frac{s_{t+1}}{s_t} = \sqrt{g(\alpha)} \ge \min_{y \in [0, 1 - x_{t,i}]} \sqrt{g(y)} = \sqrt{g(1 - x_{t,i})} \ge \frac{\sqrt{3}}{2}.
\]

We now use this bound on $s_{t+1}$ to bound $z_{t+1}$. If $z_t = \ell$ for some $\ell \ge 1$, then $s_t \ge (1/2)^{2^{\ell-1}+1}$, which implies
\[
    s_{t+1} \ge \frac{\sqrt{3}}{2} s_t \ge \frac{\sqrt{3}}{2} \left(\frac{1}{2}\right)^{2^{\ell-1}+1} = \sqrt{3} \left(\frac{1}{2}\right)^{(2^{\ell-1}+1)+1} \ge \left(\frac{1}{2}\right)^{(2^{\ell-1}+1)+1} \ge \left(\frac{1}{2}\right)^{(2^{\ell})+1} \implies z_{t+1} \le z_t + 1.
\]

To summarize, we have (i) $z_{t+1} = \max(1, z_t-1)$ w.p. $\ge 1-p = 1-U$; (ii) $z_{t+1} \le z_t + 1$ w.p. $1$. So, $z_t$ has a (weakly) higher tendency of moving left than $y_t$, as required. Finally, let us upper bound $z_0 = y_0$ as a function $\gamma$. We claim that $k \le 1 + \lg\lg(1/\gamma)$ because
\begin{equation*}
    z_0 = k \le 1 + \lg\lg(1/\gamma) 
    \Longleftrightarrow s_0 \ge \left(\frac{1}{2}\right)^{2^{k-1}+1} 
    \impliedby s_0 \ge \left(\frac{1}{2}\right)^{\lg(\frac{1}{\gamma})+1} \ge \left(\frac{1}{2}\right)^{\lg(\frac{1}{\gamma})} = \gamma.
\end{equation*}


\end{proof}

\paragraph{\textbf{Analysis Part 3 (Potential Function).}}
In the third part of the analysis, we shall make use of the potential function $f$ given in equation \eqref{eq:potential}. Let us first prove some properties of the potential function $f$. Let $z = (z_i)_{i \in [n]} \in \bR_{\ge 0}^n$ such that $\sum_{i \in [n]} z_i < 1$. Let $z^* = (z^*_i)_{i \in [n]}$ where $z^*_i = \frac{n-1}{n^2}$. Notice that $z^*$ is the equilibrium action profile. Let $\sigma = \sum_i z_i$ and $\sigma^* = \sum_i z_i^*$.
\begin{lemma}\label{lm:potential}
The potential function $f$ satisfies:
\begin{enumerate}
    \item \label{lm:potential:1} $f(z^*) = 0$;
    \item \label{lm:potential:2} $0 \le f(z) \le 1/2$;
    \item \label{lm:potential:3} $f(z) \ge \frac{1}{40}\left(\frac{n-1}{n}\right)^3$ if $\sum_{i} z_i \le \frac{3(n-1)}{4n}$;
    \item \label{lm:potential:4} $f(z) \ge \frac{1}{2}|| z - z^* ||_2^2 = \frac{1}{2} \sum_i (z_i - z^*_i)^2$ if $\sum_{i} z_i \ge \frac{3(n-1)}{4n}$;
    \item \label{lm:potential:5} $f(z) \le \frac{1}{2} \sum_i ( \sum_{j \neq i} ( z_i - z_i^* ) )^2 $.
\end{enumerate}
\end{lemma}

In Lemma~\ref{lm:potential} \eqref{lm:potential:4}, we showed that the $\ell_2$-distance of the output profile is close to the equilibrium profile if the potential function is close to $0$. In the Lemma~\ref{lm:lipschitz}, we show that we are at an approximate equilibrium if the output profile is close to the equilibrium profile.
\begin{lemma}\label{lm:lipschitz}
If $|| z - z^* ||_2 \le \epsilon$ for $0 \le \epsilon < \frac{1}{n\sqrt{n}}$, then $z$ is an $(3 \sqrt{n} \epsilon)$-approximate equilibrium.
\end{lemma}

Let us now go back to the best-response dynamics. In the next lemma, we prove that the potential converges to $0$ quickly with high probability.
\begin{lemma}\label{lm:part3}
$\Prob[f(x_t) \le \epsilon] \ge 1 - \delta$ after there have been at least $O\left(\frac{1}{ L (1-2U)} \log\left( \frac{1}{\epsilon \delta}\right)\right)$ time steps (after the warm-up phase) where total output $s_t \ge \frac{1}{4}$.
\end{lemma}
\begin{proof}
We assume that the warm-up phase has been completed.
Let $f_i(x_{t})$ denote the value of the potential function at time $t+1$ given $i$ played at time $t$. If $i$ plays at time $t$, then $x_{t+1, i} = \sqrt{s_t - x_{t,i}} - (s_t - x_{t,i})$ and $s_{t+1} = \sqrt{s_t - x_{t,i}}$, and we have
\begin{align}
    f_i(x_{t}) - \frac{1}{6}\left(\frac{n-1}{n}\right)^3 &= \frac{1}{3} s_{t+1}^3 - \sum_{j < k} x_{t+1,j} x_{t+1,k} = \frac{1}{3} s_{t+1}^3 - x_{t+1,i} \sum_{j \neq i} x_{t+1,j} - \sum_{j < k, j \neq i, k \neq i} x_{t+1,j} x_{t+1,k} \nonumber \\
    &= \frac{1}{3} (s_t - x_{t,i})^{3/2} - (\sqrt{s_t - x_{t,i}} - (s_t - x_{t,i})) \sum_{j \neq i} x_{t+1,j} - \sum_{j < k, j \neq i, k \neq i} x_{t,j} x_{t,k} \nonumber \\
    &= \frac{1}{3} (s_t - x_{t,i})^{3/2} - (\sqrt{s_t - x_{t,i}} - s_t) \sum_{j \neq i} x_{t+1,j} - \sum_{j < k} x_{t,j} x_{t,k} \nonumber \\
    &= \frac{1}{3} (s_t - x_{t,i})^{3/2} - (\sqrt{s_t - x_{t,i}} - s_t) (s_t - x_{t,i})  - \sum_{j < k} x_{t,j} x_{t,k} \nonumber \\
    &= s_t(s_t - x_{t,i}) - \frac{2}{3} (s_t - x_{t,i})^{3/2} - \sum_{j < k} x_{t,j} x_{t,k}. \nonumber
\end{align}

We first show that $f_i(x_{t}) \le f(x_t)$ w.p. $1$ for every $i \in [n]$.
Let $g(y) = \frac{1}{3} s_t^3 - y^2 \left(s_t - \frac{2}{3} y \right)$ for $y \in [0,1]$. Notice that
\[
    f(x_t) - f_i(x_{t}) = \frac{1}{3} s_t^3 - \left( s_t(s_t - x_{t,i}) - \frac{2}{3} (s_t - x_{t,i})^{3/2} \right) = g(\sqrt{s_t - x_{t,i}}).
\]
Differentiating $g(y)$ w.r.t. $y$ we get $g'(y) = 2 y (y - s_t)$. As $g'(y) \le 0$ for $y \in [0,s_t]$ and $g'(y) \ge 0$ for $y \in [s_t,1]$, we have $\min_y g(y) = g(s_t) = \frac{1}{3} s_t^3 - s_t^2 \left(s_t - \frac{2}{3} s_t \right) = 0$. As $s_t - x_{t,i} \in [0,1]$, we have $g(\sqrt{s_t - x_{t,i}}) \ge g(s_t) = 0 \implies f(x_t) \ge f_i(x_{t})$, as required.

We next prove that the expected value of the potential $\Exp[f(x_t)]$ decreases by a multiplicative factor in any given time step $t$ to $t+1$ if $s_t \ge 1/4$. In particular, we show that $\Exp[f(x_{t+1})  | x_t] \le \left(1 - \kappa L \right) f(x_t)$ for some constant $\kappa > 0$, or equivalently, $\Exp[f(x_{t+1}) | x_t] - f(x_t) \le - \kappa L f(x_t)$, given $s_t \ge 1/4$. As $f_i(x_t)$ denotes the value of the potential at time $t+1$ given $i$ played at time $t$, so $\Exp[f(x_{t+1}) | x_t] = \sum_i w_{t,i}(x_t) f_i(x_t)$. Further, as $g(\sqrt{s_t - x_{t,i}}) = f(x_t) - f_i(x_t)$, we can alternatively prove the slightly stronger condition
\begin{multline}\label{eq:lm:part3:1}
    \Exp[f(x_{t+1}) | x_t] - f(x_t) \le -\kappa L f(x_t) 
    \Longleftrightarrow \sum_i w_{t,i}(x_t) (f(x_t) - f_i(x_t)) \\
    = \sum_i w_{t,i}(x_t) g(\sqrt{s_t - x_{t,i}}) \ge \kappa L f(x_t) 
    \impliedby \sum_i g(\sqrt{s_t - x_{t,i}}) \ge \kappa f(x_t).
\end{multline}

Let us assume that $s_t \ge 1/4$. Going back to our function $g(y) = \frac{1}{3} s_t^3 - y^2 \left(s_t - \frac{2}{3} y \right)$, we have $g''(y) = 4 y - 2 s_t$. We show that $g(y) \ge \kappa_g (y - s_t)^2$ for a positive constant $\kappa_g = \min\left( \frac{1}{20}, \left( \frac{1}{3} - \frac{27}{125} \right) \left( \frac{1}{4} \right)^3 \right) > 0$ using the case analysis below:
\begin{itemize}
    \item Case $y \ge 3s_t/5$. Then, $g''(y) = 4 y - 2 s_t \ge 2s_t/5 \ge 1/10$. As $g(s_t) = 0$ and $g'(s_t) = 0$, we get 
    \[
        g(y) \ge g(s_t) + g'(s_t) (y - s_t) + \frac{1}{2} \frac{2 s_t}{5} (y - s_t)^2 \ge \frac{1}{20} (y - s_t)^2.
    \]
    \item Case $y \le 3s_t/5$. As $g'(y) \le 0$ for $y \in [0, s_t]$, we have
    \[
        g(y) \ge g\left( \frac{3s_t}{5} \right) = \frac{1}{3} s_t^3 - \left( \frac{3s_t}{5} \right)^2 \left( s_t - \frac{2}{3}\frac{3s_t}{5} \right) = \left( \frac{1}{3} - \left(\frac{3}{5}\right)^3 \right) s_t^3 \ge \left( \frac{1}{3} - \frac{27}{125} \right) \left( \frac{1}{4} \right)^3.
    \]
    Now, as $y \in [0,1]$, we have $(y - s_t)^2 \le 1$, and we get our required result.
\end{itemize}
Plugging in $\sqrt{s_t - x_{t,i}}$ into $g(y)$, we get
\begin{equation}\label{eq:lm:part3:2}
    g(\sqrt{s_t - x_{t,i}}) \ge \kappa_g (\sqrt{s_t - x_{t,i}} - s_t)^2 \implies \sum_i g(\sqrt{s_t - x_{t,i}}) \ge \kappa_g \sum_i (\sqrt{s_t - x_{t,i}} - s_t)^2,
\end{equation}
given $s_t \ge 1/4$.
We now prove that $\sum_i (\sqrt{s_t - x_{t,i}} - s_t)^2 \ge \kappa_h f(x_t)$ for a positive constant $\kappa_h = \min\left( \left( \sqrt{\frac{6}{5}} - 1 \right)^2 \frac{1}{8}, \frac{1}{12} \right) > 0$, assuming $s_t \ge 1/4$. We do a case analysis on the value of $s_t$:
\begin{itemize}
    \item $\frac{1}{4} \le s_t \le \frac{5(n-1)}{6n}$. We claim that there exists an agent $i$ such that $\sqrt{s_t - x_{t,i}} \ge \sqrt{\frac{6}{5}} s_t$. Let this be false, i.e., for every $j \in [n]$, $\sqrt{s_t - x_{t,j}} < \sqrt{\frac{6}{5}} s_t$. Then, we get a contradiction because
    \begin{multline*}
        \sqrt{s_t - x_{t,j}} < \sqrt{\frac{6}{5}} s_t \Longleftrightarrow s_t - x_{t,j} < \frac{6}{5} s_t^2, \text{ for all $j \in [n]$} \\
        \implies \sum_j (s_t - x_{t,j}) < \sum_j \frac{6}{5} s_t^2 \implies (n-1) s_t < \frac{6}{5} n s_t^2 \implies s_t > \frac{5(n-1)}{6n}.
    \end{multline*}
    Now, as $\sqrt{s_t - x_{t,i}} \ge \sqrt{\frac{6}{5}} s_t$ for an agent $i$, we have
    \begin{equation}\label{eq:lm:part3:31}
        \sum_j (\sqrt{s_t - x_{t,j}} - s_t)^2 \ge (\sqrt{s_t - x_{t,i}} - s_t)^2 \ge \left( \sqrt{\frac{6}{5}} - 1 \right)^2 s_t^2 \ge \left( \sqrt{\frac{6}{5}} - 1 \right)^2 \frac{1}{8} f(x_t),
    \end{equation}
    because $s_t \ge 1/4$ and $f(x_t) \le 1/2$ (Lemma~\ref{lm:potential} \eqref{lm:potential:2}).

    \item $\frac{5(n-1)}{6n} \le s_t \le 1$. First, notice that
    \begin{multline}\label{eq:lm:part3:3}
        ((s_t - x_{t,i}) - s_t^2)^2 = (\sqrt{s_t - x_{t,i}} - s_t)^2 (\sqrt{s_t - x_{t,i}} + s_t)^2 \\
        \implies \sum_i (\sqrt{s_t - x_{t,i}} - s_t)^2 = \sum_i  \frac{((s_t - x_{t,i}) - s_t^2)^2}{(\sqrt{s_t - x_{t,i}} + s_t)^2} \ge \frac{1}{4} \sum_i ((s_t - x_{t,i}) - s_t^2)^2,
    \end{multline}
    because $s_t \le 1$ and $\sqrt{s_t - x_{t,i}} \le 1$. We do a change of variable. Let $y_k = s_t - x_{t,k}$. Let $\rho = \sum_k y_k = \sum_k (s_t - x_{t,k}) = (n-1) s_t \implies s_t = \rho / (n-1)$. Let us define the function $h(y) = \frac{1}{4} \sum_k (y_k - (\frac{\rho}{n-1})^2)^2 $. Taking derivative of $h$ we get
    \begin{align*}
        \frac{\partial h}{\partial y_i} &= \frac{1}{2} \left( y_i - \left( \frac{\rho}{n-1} \right)^2 \right) + \frac{1}{2} \sum_k \left( y_k - \left( \frac{\rho}{n-1} \right)^2 \right) \left( \frac{-2}{n-1} \frac{\rho}{n-1} \right) \\
        & \qquad = \frac{y_i}{2} - \frac{\rho^2}{2(n-1)^2} - \frac{\rho^2}{(n-1)^2} + \frac{n\rho^3}{(n-1)^4} = \frac{y_i}{2} + \frac{n\rho^3}{(n-1)^4} - \frac{3\rho^2}{2(n-1)^2}, \\
        \frac{\partial^2 h}{\partial y_i^2} &= \frac{1}{2} + \frac{3 n \rho^2}{(n-1)^4} - \frac{3 \rho}{(n-1)^2} , \qquad  \frac{\partial^2 h}{\partial y_i \partial y_j} = \frac{3 n \rho^2}{(n-1)^4} - \frac{3 \rho}{(n-1)^2}.
    \end{align*}
    So, the hessian $\nabla^2 h = \left( \frac{3 n \rho^2}{(n-1)^4} - \frac{3 \rho}{(n-1)^2} \right) + \frac{1}{2} I$, where $I$ is the identity matrix. Let's find the eigenvalues of $\nabla^2 h$. Using the matrix determinant lemma, we have
    \[
        \det(\nabla^2 h - \lambda I) = \det\left( \left( \frac{3 n \rho^2}{(n-1)^4} - \frac{3 \rho}{(n-1)^2} \right) + \left( \frac{1}{2} - \lambda \right) I \right) = \left( 1 + \frac{n  \left( \frac{3 n \rho^2}{(n-1)^4} - \frac{3 \rho}{(n-1)^2} \right)}{\left( \frac{1}{2} - \lambda \right)}\right) \left( \frac{1}{2} - \lambda \right)^n.
    \]
    Setting $\det(\nabla^2 h - \lambda I) = 0$ we get $\lambda = \frac{1}{2}$ or $\lambda = \frac{1}{2} + \frac{3 n^2 \rho^2}{(n-1)^4} - \frac{3 n \rho}{(n-1)^2} = \frac{1}{2} + 3\frac{np}{(n-1)^2} \left(\frac{np}{(n-1)^2} - 1 \right)$. As $s_t \ge \frac{5(n-1)}{6n} \implies \rho = (n-1) s_t \ge \frac{5(n-1)^2}{6n} \implies \frac{np}{(n-1)^2} \ge \frac{5}{6}$, we get $\lambda \ge \frac{1}{2} + 3 \frac{5}{6} \left( \frac{5}{6} - 1 \right) = \frac{1}{12}$. So, $h$ is $(1/12)$-strongly convex. Now, let $y_k^* = \frac{(n-1)^2}{n^2}$ for every $k \in [n]$, which corresponds to the equilibrium. Notice that we have $\rho^* = \frac{(n-1)^2}{n} \ge \frac{5(n-1)^2}{6n}$. Further, $h(y^*) = 0$ because $\left( y_i^* - \left( \frac{\rho}{n-1} \right)^2 \right)^2 = 0$ for very $i \in [n]$ and $\nabla h(y^*) = 0$ because
    \begin{align*}
        (\nabla h(y^*))_i = \frac{y_i^*}{2} + \frac{n(\rho^*)^3}{(n-1)^4} - \frac{3(\rho^*)^2}{2(n-1)^2} = \frac{1}{2} \frac{(n-1)^2}{n^2} + \frac{n}{(n-1)^4} \frac{(n-1)^6}{n^3} - \frac{3}{2(n-1)^2} \frac{(n-1)^4}{n^2} = 0,
    \end{align*}
    for every $i \in [n]$. So, we can write
    \[
        h(y) \ge h(y^*) + (\nabla h(y^*))^{\intercal} (y - y^*) + \frac{1}{24} || y - y^* ||_2^2 = \frac{1}{24} || y - y^* ||_2^2.
    \]
    From Lemma~\ref{lm:potential} \eqref{lm:potential:5}, we know that $|| y - y^* ||_2^2 \ge 2 f(x_t)$. Putting everything together, 
    \begin{equation}\label{eq:lm:part3:32}
        \sum_j (\sqrt{s_t - x_{t,j}} - s_t)^2 \ge \frac{1}{4} \sum_j ((s_t - x_{t,j}) - s_t^2)^2 = h(y) \ge \frac{1}{24} || y - y^* ||_2^2 \ge \frac{1}{12} f(x_t).
    \end{equation}
\end{itemize}
Combining \eqref{eq:lm:part3:1}, \eqref{eq:lm:part3:2}, \eqref{eq:lm:part3:31}, and \eqref{eq:lm:part3:32}, we have $\Exp[f(x_{t+1}) | x_t] \le \left( 1 - \kappa_g \kappa_h L \right) f(x_t)$ assuming $s_t \ge 1/4$. 

Let $\kappa = \kappa_g \kappa_h$.
Taking expectation w.r.t. $x_t$, we get $\Exp[f(x_{t+1})] \le \left( 1 - \kappa L \right) \Exp[f(x_t)]$ for all $x_t$ conditioned on $s_t \ge 1/4$. 
Further, as $f_i(x_t) \le f(x_t)$ for every $i \in [n]$ w.p. $1$, we also have $\Exp[f(x_{t+1})] \le \Exp[f(x_t)] $ w.p. $1$.
So, the sequence $\Exp[f(x_t)]$ is a non-increasing sequence and decreases by a constant factor for time steps when $s_t \ge 1/4$.

Assuming that there have been at least $m \ge \frac{1}{\kappa L} \ln\left( \frac{1}{2 \epsilon \delta} \right)$ steps after the warm-up phase until time $t$ where the total output was at least $1/4$, we get
\[
    \Exp[f(x_t)] \le (1 - \kappa L)^m f(x_{T_{warm}}) \le \frac{(1 - \kappa L)^m}{2}  \le \frac{e^{-\kappa L m}}{2} \le \epsilon \delta.
\]
As $\Exp[f(x_{t})] \le \epsilon \delta$ and $f(x_{t}) \ge 0$, using Markov inequality we have $\Prob[f(x_{t}) \ge \epsilon] \le \frac{\Exp[f(x_{t})]}{\epsilon} \le \delta$, which implies $\Prob[f(x_{t}) \le \epsilon]$ w.p. $1-\delta$, as required.
\end{proof}

For $\epsilon < \frac{3}{n}$ and $s_t \ge \frac{3(n-1)}{n}$, from Lemma~\ref{lm:lipschitz}, we can ensure that $x_t$ is an $\epsilon$-approximate equilibrium by ensuring $|| x_t - z^* || \le \frac{1}{3 \sqrt{n}} \epsilon $, where $z^*$ is the equilibrium profile; which we can further guarantee, using Lemma~\ref{lm:potential} \eqref{lm:potential:4}, by ensuring that $f(x_t) \le \frac{1}{6 \sqrt{n}} \epsilon$. On the other hand, for $\epsilon \ge 3/n$ or $s_t \le \frac{3(n-1)}{n}$, we can simply plug in $\frac{3}{n}$ instead of $\epsilon$ and get an $(\frac{3}{n})$-approximate equilibrium, which is also an $\epsilon$-approximate equilibrium, by ensuring that $f(x_t) \le \frac{2}{n \sqrt{n}}$.


Let us now combine all our lemmas and finish the proof. We set the probabilities of failure in each of the Lemmas~\ref{lm:part1},~\ref{lm:part2},~and~\ref{lm:part3} to be $\delta/3$, ensuring total probability of failure is at most $\delta$ by union bound. So, we get an $\epsilon$-approximate equilibrium w.p. $1-\delta$ in steps:
\[
    O\left( \frac{1}{1 - 2U} \log\log\left( \frac{1}{\gamma} \right) 
    + \frac{1}{L (1-2U)} \log\left( \frac{n}{\epsilon \delta}\right) 
    + \frac{1}{(1-2U)^2} \log\left( \frac{1}{\delta} \right) \right) .
\]

\end{proof}
\section{Simulations}
For $2$ homogeneous agents, our convergence bound for the best-response dynamics is tight up to a constant number of time steps; our simulations reflect the same. Here, we focus on $n \ge 3$ homogeneous agents where our lower and upper bounds do not match; particularly, w.r.t. the parameter $\epsilon$ we have an exponential gap between $\log\log(1/\epsilon)$ and $\log(1/\epsilon)$.

We plot the convergence times for five models for selecting the agent at each time step: (i) the agent is selected uniformly at random (\unif), for which we provide bounds in Corollary~\ref{thm:hom:unif}; (ii) the agents are selected in a round robin fashion (\round); (iii) the lexicographically smallest agent whose utility increases by more than $\epsilon$ by playing the best response is selected (\lex); (iv) a myopic-worst-case model that selects the agent whose utility increases by the smallest possible value above $\epsilon$ (\worst); (v) a myopic-best-case model that selects the agent whose utility increases by the largest possible value (\best). We start from the initial profile of $(a, 0, \ldots, 0)$.\footnote{The results were similar for a random starting profile. Moreover, the dependency on the starting profile, $\gamma$, is very weak.} We plot the time till convergence to an $\epsilon$-approximate equilibrium with respect to one of three parameters (while keeping the others fixed): $\epsilon$, $\gamma \equiv a$, and $n$. When these parameters are not being varied, we set them to $\epsilon = 10^{-10}$, $\gamma = 10^{-10}$, and $n = 10$. We took an average over $100$ runs for \unif. 
The omitted plots are in Appendix~\ref{sec:app:plots}.

We observe the following rates of convergence w.r.t. $\epsilon$; for $n$ and $\gamma$, see Appendix~\ref{sec:app:plots}. The time for \unif, \round\, and \best\ seems proportional to $\log(1/\epsilon)$; we could not pin down the exact slope for \lex\ and \worst, but seems somewhere between $(\log(1/\epsilon))^5$ and $(1/\epsilon)^{1/5}$. Important observations: (i) it is unlikely that we can get a $\log\log(1/\epsilon)$ bound for \unif, \round, and \best; (ii) \lex\ and \worst\ may have a polynomial rather than poly-logarithmic dependency on $1/\epsilon$. Moreover, observation (ii) also highlights that the worst-case convergence time may be $\text{poly}(1/\epsilon)$ instead of $\text{polylog}(1/\epsilon)$.

\begin{figure*}[ht]
  \centering
  \subfloat[Time vs $(1/\epsilon)$]{
  \includegraphics[width=0.5\textwidth]{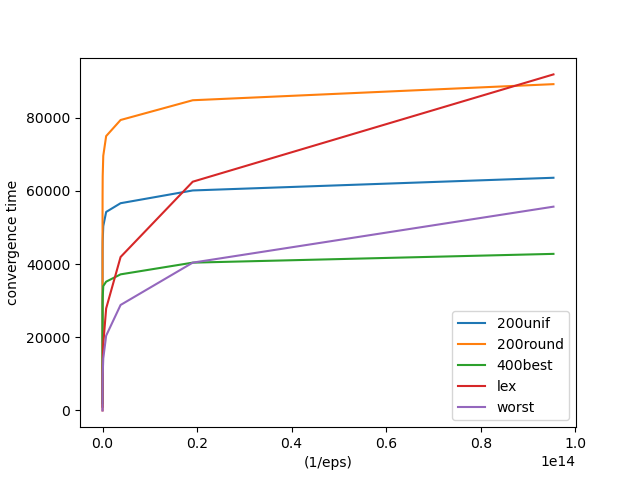}
  }
  \subfloat[Time vs $\log(1/\epsilon)$]{
  \includegraphics[width=0.5\textwidth]{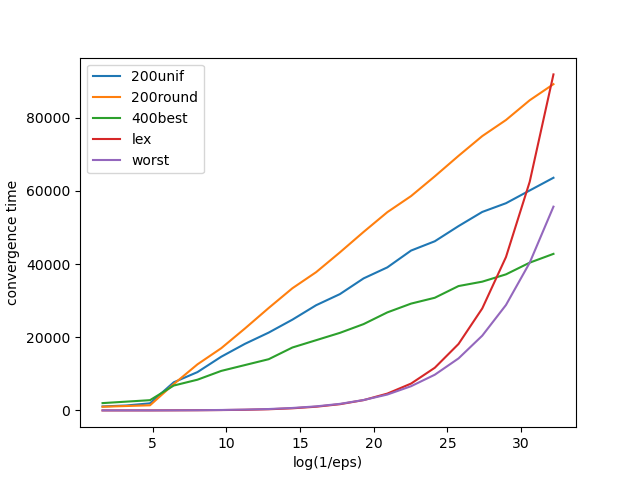}
  }
\end{figure*}
\section{Conclusion and Open Problems}
We showed fast convergence of best-response dynamics in lottery contests with homogeneous agents to an $\epsilon$-approximate equilibrium. We get the following dependency on relevant parameters: (i) tight double-logarithmic dependency of $\Theta(\log\log(1/\epsilon))$ for two agents and logarithmic upper bound of $O(\log(1/\epsilon))$ for more than two agents; (ii) almost linear dependency of $\tilde{\Theta}(n)$ on the number of agents; (iii) double-logarithmic dependency on the initial state, captured by the $\Theta(\log\log(1/\gamma))$ term. For non-homogeneous agents, we show non-convergence using best-response cycles.

The following are a few related open directions:
\begin{enumerate}
    \item For more than two agents, our lower bound $\Omega(\log\log(1/\epsilon))$ and upper bound $O(\log(1/\epsilon))$ with respect to $\epsilon$ has an exponential gap. Our simulations indicate a $\Omega(\log(1/\epsilon))$ lower bound. Prove this formally.
    \item For almost homogeneous agents, our simulations show convergence to the equilibrium. Characterize the amount of non-homogeneity that allows convergence (or causes non-convergence). See Section~\ref{sec:non-hom} for more details.
    \item Tullock contests with convex cost functions, which includes Tullock contests with concave contest success functions~\cite{tullock1980efficient}, have unique pure-strategy Nash equilibrium. Prove convergence of best-response dynamics for these models for homogeneous agents (and non-convergence for non-homogeneous agents).
\end{enumerate}

\bibliographystyle{ACM-Reference-Format}
\bibliography{ref}

\appendix
\section{Omitted Proofs}\label{sec:app:proofs}
\begin{proof}[Proof of Theorem~\ref{thm:hom2}]
We use the rate of convergence of the sequence $z_t$ in Lemma~\ref{lm:2hom} to prove fast convergence of best-response dynamics in lottery contests. We do so by mapping the state (action profile) $x_t$  at time $t$ of the best-response dynamics to the state $z_t$ of the sequence.

Notice that only the action of agent $1$ in the initial state $x_0 = (x_{0,1}, x_{0,2})$ is relevant for best-response dynamics, because, at $t=0$, agent $2$ makes the best-response move, which is a function of only $x_{0,1}$, and which overwrites $x_{0,2}$. Also, notice that w.l.o.g. we can assume that $0 < x_{0,1} < 1/4$ because even if it is not true, we get it in a couple of steps:
\begin{enumerate}
    \item \label{lm:2agents:1} If $x_{0,1} = 0$, then we know that $x_{1,2} = a \in (0, 1/4)$ by our assumptions on $a$. We shift the timeline by $1$ time step and get the required condition.
    \item \label{lm:2agents:2} If $x_{0,1} \ge 1$, then we get $x_{1,2} = 0$ and we get case \eqref{lm:2agents:1} in one additional time step.
    \item If $x_{0,1} = 1/4$, then we are already at the equilibrium.
    \item If $\frac{1}{4} < x_{0,1} < 1$, then in one step we have $x_{1,2} = \sqrt{x_{0,1}} - x_{0,1} \in (0, 1/4)$ because $\sqrt{x_{0,1}} - x_{0,1} > 0$ for $x_{0,1} \in (0,1)$ and $\sqrt{x_{0,1}} - x_{0,1} \le \max_{y \in (0,1)} y(1-y) < 1/4$.
\end{enumerate}

Let us define the sequence $z_t$ as follows:
\[
    z_t =
    \begin{cases}
        \sqrt{x_{t,1}} &\text{ if $\bmod(t,2) = 0$,} \\
        \sqrt{x_{t,2}} &\text{ if $\bmod(t,2) = 1$.}
    \end{cases}
\]
$z_t$ has a one-to-one correspondence with the states of the best-response dynamics of the lottery contest. If $\bmod(t,2) = 0$, then $x_{t+1,2} = \sqrt{x_{t,1}} - x_{t,1} \Longleftrightarrow z_{t+1}^2 = z_{t} - z_t^2 \Longleftrightarrow z_{t+1} = \sqrt{z_t (1 - z_t)}$. An analogous equivalence holds for the case $\bmod(t,2) = 1$ as well. Notice that we can assume $z_0 = \sqrt{x_{0,1}} \in (0, 1/2)$ as per our discussion in the last paragraph. 

From Lemma~\ref{lm:2hom}, we know that in $\lg\lg(\frac{4}{\epsilon}) + \lg\lg(\frac{1}{\gamma}) + \Theta(1) = \lg\lg(\frac{1}{\epsilon}) + \lg\lg(\frac{1}{\gamma}) + \Theta(1)$ steps $z_t$ will reach $\ge \frac{1}{2} - \frac{\epsilon}{16}$ starting from $z_0 = \gamma$. As per our definition of $z_t$ (based on $x_t$), we can set $\gamma = \sqrt{x_{0,1}}$ if $0 < x_{0,1} < \frac{1}{4}$, $\gamma = x_{0,1} - \sqrt{x_{0,1}}$ if $\frac{1}{4} \le x_{0,1} < 1$, and $\gamma = \sqrt{a}$ otherwise. Let $\hepsilon = \frac{\epsilon}{4}$. In the rest of the proof, we show that $z_{t-1} \ge \frac{1}{2} - \hepsilon$ implies $x_t$ is an $\epsilon$-approximate equilibrium. We can assume $\epsilon \in [0,1] \Longleftrightarrow \hepsilon \in [0, \frac{1}{16}]$.

Let's say $t$ is odd. As $z_{t-1} \ge \frac{1}{2} - \hepsilon$, therefore $x_{t,1} = x_{t-1, 1} = z_{t-1}^2 \ge (\frac{1}{2} - \hepsilon)^2 \ge \frac{1}{4} - \hepsilon$. Also, $x_{t,2} = z_{t}^2 > z_{t-1}^2 \ge \frac{1}{4} - \hepsilon$. A similar argument applies when $t$ is even. So, $\frac{1}{4} - \hepsilon \le x_{t,i} \le \frac{1}{4}$ for $i \in \{1,2\}$.

Let us first lower bound the utility that an agent, w.l.o.g. agent $1$, gets at the profile $x_{t}$ given $x_{t,1}, x_{t,2} \in \frac{1}{4} - [0, \hepsilon]$.
\begin{align*}
	u_1(x_{t,1}, x_{t,2}) = \frac{x_{t,1}}{x_{t,1} + x_{t,2}} - x_{t,1} \ge  \frac{x_{t,1}}{x_{t,1} + \frac{1}{4}} - x_{t,1} = \frac{4x_{t,1}}{4x_{t,1} + 1} - x_{t,1}.
\end{align*}
Let $g(y) = \frac{4y}{4y + 1} - y$ for $y \in [0, \frac{1}{4}]$. The derivative is $g'(y) = \frac{4}{(4y + 1)^2} - 1 = \frac{4 - (4y+1)^2}{(4y+1)^2}$. As $y \in [0, \frac{1}{4}]$, therefore $(4y + 1)^2 \in [1, 4]$, so $g'(y) \ge 0$. As $x_{t,1} \ge \frac{1}{4} - \hepsilon$, this implies that 
\begin{align*}
	u_1(x_{t,1}, x_{t,2}) \ge \frac{\frac{1}{4} - \hepsilon}{\frac{1}{4} - \hepsilon + \frac{1}{4}} - \left(\frac{1}{4} - \hepsilon\right) = \frac{\frac{1}{2} - 2\hepsilon}{1 - 2\hepsilon} - \left(\frac{1}{4} - \hepsilon\right) \ge \left(\frac{1}{2} - 2\hepsilon\right) - \left(\frac{1}{4} - \hepsilon\right) = \frac{1}{4} (1 - 4\hepsilon)
\end{align*}
Let us now upper bound the utility that agent $1$ can get by deviating.
\begin{multline*}
	\max_s u_1(s, x_{t,2}) = \max_s \frac{s}{s + x_{t,2}} - s \\
	= \frac{\sqrt{x_{t,2}} - x_{t,2}}{(\sqrt{x_{t,2}} - x_{t,2}) + x_{t,2}} - (\sqrt{x_{t,2}} - x_{t,2}) = 1 - \sqrt{x_{t,2}} - (\sqrt{x_{t,2}} - x_{t,2}) = (1 - \sqrt{x_{t,2}})^2.
\end{multline*}
As $(1 - \sqrt{x_{t,2}})^2$ increases as $x_{t,2}$ decreases, and $x_{t,2} \ge \frac{1}{4} - \hepsilon$, so
\begin{multline*}
	\max_s u_1(s, x_{t,2}) \le \left( 1 - \sqrt{\frac{1}{4} - \hepsilon} \right)^2 = \frac{1}{4} \left( 2 - \sqrt{1 - 4\hepsilon} \right)^2 \le \frac{1}{4} \left( 2 - (1 - 4\hepsilon) \right)^2 = \frac{1}{4} \left( 1 + 4 \hepsilon \right)^2 \\
	= \frac{1}{4} ( 1 + (16 \hepsilon) \hepsilon + 8 \hepsilon ) \le \frac{1}{4} ( 1 + 9 \hepsilon ) \text{ as $\hepsilon \le \frac{1}{16}$}.
\end{multline*}
Finally, to prove that $x_t$ is an $\epsilon$-approximate equilibrium we need
\begin{multline*}
	u_1(x_{t,1}, x_{t,2}) \ge (1 - \epsilon) \max_s u_1(s, x_{t,2}) \impliedby \frac{1}{4} (1 - 4\hepsilon) \ge (1 - \epsilon) \frac{1}{4} ( 1 + 9 \hepsilon ) \Longleftrightarrow 1 - \epsilon \le \frac{1 - 4 \hepsilon}{1 + 9 \hepsilon} \\
	\impliedby 1 - \epsilon \le (1 - 4\hepsilon) (1 - 9\hepsilon) \impliedby 1 - \epsilon \le 1 - 13 \hepsilon \Longleftrightarrow \epsilon \ge 13 \hepsilon \impliedby \epsilon = 16 \hepsilon.
\end{multline*}
\end{proof}

\begin{proof}[Proof of Lemma~\ref{lm:coupon}]
Let $\tau$ be the time it takes to collect all $n$ coupons in the coupon collector problem where each coupon is selected w.p. $1/n$, the following result is well-known.
\begin{lemma}\cite{mitzenmacher2017probability}\label{lm:coupon:unif}
    For any constant $c > 0$, we have the following high probability bounds: (i) upper bound, $\Prob[\tau > n \ln n + cn] < e^{-c}$; (ii) lower bound, $\Prob[\tau < n \ln n - cn] < e^{-c}$.
\end{lemma}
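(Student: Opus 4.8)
The plan is to prove the two tail bounds separately: the upper tail by a union bound, and the lower tail by a second-moment (Chebyshev) argument. Throughout, let $\tau$ be the number of draws needed to see all $n$ coupons, and for a horizon $m$ and coupon $i \in [n]$ let $A_i$ be the event that coupon $i$ has not appeared in the first $m$ draws, so $\{\tau > m\} = \bigcup_{i} A_i$. For the \textbf{upper tail} I would set $m = n\ln n + cn$; since the draws are i.i.d.\ uniform, $\Prob[A_i] = (1-1/n)^m < e^{-m/n} = e^{-\ln n - c} = \frac{1}{n}e^{-c}$, and a union bound over the $n$ coupons gives $\Prob[\tau > m] \le \sum_i \Prob[A_i] < e^{-c}$. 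This direction is entirely routine.

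For the \textbf{lower tail} I would set $m = n\ln n - cn$ and let $X_i = 1$ if coupon $i$ is missing after $m$ draws and $X_i = 0$ otherwise, with $Z = \sum_i X_i$ the number of missing coupons. Since $\{\tau < m\} \subseteq \{Z = 0\}$, it suffices to upper bound $\Prob[Z = 0]$, which I would do via $\Prob[Z = 0] \le \mathrm{Var}(Z)/\Exp[Z]^2$. First, $\Exp[Z] = n(1-1/n)^m$, which a standard estimate of $(1-1/n)^m$ identifies as (essentially) $e^c$. Second, the $X_i$ are pairwise negatively correlated: for $i \ne j$, $\Exp[X_i X_j] = (1-2/n)^m$ while $\Exp[X_i]\Exp[X_j] = (1-1/n)^{2m}$, and $(1-1/n)^2 = 1 - 2/n + 1/n^2 > 1 - 2/n$ gives $\Exp[X_iX_j] \le \Exp[X_i]\Exp[X_j]$; hence $\mathrm{Var}(Z) \le \sum_i \mathrm{Var}(X_i) \le \Exp[Z]$. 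Combining, $\Prob[\tau < m] \le \Prob[Z = 0] \le \mathrm{Var}(Z)/\Exp[Z]^2 \le 1/\Exp[Z] \approx e^{-c}$.

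The main obstacle is the estimate $\Exp[Z] = n(1-1/n)^m \ge e^c$ used in the last line: expanding $\ln(1-1/n) = -1/n - 1/(2n^2) - \cdots$ shows $n(1-1/n)^m = e^{c}\cdot e^{-O((\ln n)/n)}$, so matching the clean constant $e^{-c}$ requires either taking $n$ large enough, folding the $e^{-O((\ln n)/n)}$ factor into a reparametrisation of $c$, or appealing to the sharper Poisson-approximation bound for the coupon collector in \cite{mitzenmacher2017probability} (which bounds $\Prob[\tau < n\ln n - cn]$ by roughly $e^{-e^c}$, and $e^{-e^c} < e^{-c}$ for every $c>0$ since $e^c > c$). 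All remaining computations are routine.
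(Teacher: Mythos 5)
You should first note that the paper contains no proof of this lemma to compare against: it is imported verbatim from \cite{mitzenmacher2017probability} and used as a black box inside the proof of Lemma~\ref{lm:coupon}. Judged on its own terms, your upper-tail argument is the standard one and is complete: with $m=n\ln n+cn$, each coupon is missing with probability $(1-1/n)^m<e^{-m/n}=e^{-c}/n$, and a union bound finishes it.

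The only real issue is the one you flagged yourself in the lower tail. The second-moment skeleton is sound (pairwise negative correlation of the missing-coupon indicators gives $\mathrm{Var}(Z)\le\Exp[Z]$, hence $\Prob[\tau<m]\le\Prob[Z=0]\le 1/\Exp[Z]$), but $\Exp[Z]=n(1-1/n)^m$ sits strictly \emph{below} $ne^{-m/n}=e^{c}$, so the elementary estimate goes the wrong way; quantitatively, $\ln(1-1/n)\ge-1/(n-1)$ gives only $\Exp[Z]\ge e^{c}n^{-1/(n-1)}$, i.e.\ $\Prob[\tau<n\ln n-cn]\le n^{1/(n-1)}e^{-c}\le 2e^{-c}$ (worst case $n=2$). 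So as written you get the claimed bound up to a factor of $2$, not the clean $<e^{-c}$ for every finite $n$, and note that the sharp statement you propose to invoke instead is an $n\to\infty$ Gumbel-type limit, so quoting it does not by itself yield a finite-$n$ inequality either. A clean finite-$n$ repair along your own lines: the events ``coupon $i$ collected by time $m$'' are negatively correlated (negative association of the multinomial occupancy counts), so $\Prob[Z=0]\le\bigl(1-(1-1/n)^m\bigr)^n\le e^{-\Exp[Z]}\le e^{-e^{c}/2}$, and since $e^{c}>2c$ for all $c>0$ this is $<e^{-c}$. For the way the paper uses the lemma, your factor-of-$2$ slack would in any case only perturb the constants hidden in the $O/\Omega$ bounds of Lemma~\ref{lm:coupon}.
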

The analysis underlying Lemma~\ref{lm:coupon:unif} goes as follows: The time to collect all coupons $\tau$ can be decomposed as $\tau = \sum_i \tau_i$, where $\tau_i$ is the time it takes to collect the $i$-th coupon. After $(i-1)$ coupons have been collected, the probability that we get a coupon that has not yet been collected in the next time step is equal to $p_i = \frac{n - i + 1}{n}$. So, $\tau_i$ corresponds to the time till the first head of a geometric random variable with parameter $p_i$.  In particular, $p_1 = 1$, $p_2 = (n-1)/n$, $p_3 = (n-2)/n$, and so on.

For the best-response dynamics, let $T = \sum_i T_i$ denote the time it takes to for every agent to play at least once, where $T_i$ is the time between the $(i-1)$-th and the $i$-th unique agent. Remember that we are doing a worst-case analysis over the random selection process parameterized by $L$ and $U$. In the first time step, we get the first unique agent w.p. $1 \ge L n p_1$ as $L \le 1/n$. In the second time step, the maximum possible probability that the selection process can assign to already selected agents is bounded above by $\min(U
, 1 - (n-1)L) \le 1 - (n-1)L$. So, the probability of selecting a new agent is at least $(n-1) L = \frac{(n-1) L n}{n} = L n p_2$. Similarly, we can show that for all $i$, after exactly $(i-1)$ agents have played at least once, the probability that we select an agent who has not yet played in the next time step is at least $L n p_i$. This implies that $\Prob[T_i \le k] \le Ln \Prob[\tau_i \le k]$ for all $i \in [n]$ and $k \ge 1$. Using the upper bound in Lemma~\ref{lm:coupon:unif}, we get $\Prob[T > \frac{1}{L}(\ln n + c)] < e^{-c}$, setting $c = \ln(1/\delta)$, we get $\Prob[T \le \frac{1}{L}(\ln n + \ln (\frac{1}{\delta}))] \ge 1 - \delta$ as required.

The idea for the lower bound is similar. As discussed earlier, after exactly $(i-1)$ unique agents have played, the total probability that the selection process can assign to the agents who have already played, in worst-case, is equal to $\min((i-1)U, 1 - (n-(i-1))L)$. If $(i-1)U \ge 1$, then $\min((i-1)U, 1 - (n-(i-1))L) = 1 - (n-(i-1))L$, which happens for at least the last $n - \frac{1}{U}$ agents. Moreover, the time it takes to collect the later coupons is higher than the earlier ones (probabilistically). So, using the lower bound of Lemma~\ref{lm:coupon:unif}, we get $\Prob[T \ge \frac{1}{L}(1 - \frac{1}{nU}) \log(n \delta)] \ge 1 - \delta$. On the other hand, the selection process can select agents uniformly, which gives us the bound $\Prob[T \ge n \log(n \delta)] \ge 1 - \delta$. So, $\Prob[T \ge \max(\frac{1}{L}(1 - \frac{1}{nU}), n) \log(n \delta)] \ge 1 - \delta$ in the worst case.
\end{proof}
\begin{proof}[Proof of Lemma~\ref{lm:prop1}]
    Let's first show that $s_t > 0$ for $t \ge 1$. For contradiction, say $s_t = 0$ for some $t \ge 1$. Let $i = i_{t-1}$ denote the agent that moved at time $t-1$. Notice that $s_t = 0 \implies s_{t-1} - x_{t-1,i} = s_{t} - x_{t} \le s_t = 0$. If $s_{t-1} - x_{t-1,i} = 0$, then as a best response to this, agent $i$ plays $x_{t,i} = a > 0$. So, $s_t > 0$, contradiction.

    Now, for any $t$, if $s_t < 1$, then $s_{t + 1} < 1$ because
    \begin{itemize}
        \item if $s_{t} - x_{t, i_{t}} = 0$, then $s_{t + 1} = x_{t+1, i_{t}} = a < 1$,
        \item if $0 < s_{t} - x_{t, i_{t}} < 1$, then $s_{t + 1} = (s_{t} - x_{t, i_{t}}) + x_{t+1, i_{t}} = (s_{t} - x_{t, i_{t}}) + \sqrt{s_{t} - x_{t, i_{t}}} - (s_{t} - x_{t, i_{t}}) = \sqrt{s_{t} - x_{t, i_{t}}} < 1$,
        where $x_{t+1, i_{t}} = \sqrt{s_{t} - x_{t, i_{t}}} - (s_{t} - x_{t, i_{t}})$ is the best response of agent $i_{t}$ at time $t$ (see equation \eqref{eq:br:single}).
    \end{itemize}

    Finally, let us assume that at time $t$, $x_{t,i} > 0$, $x_{t,j} > 0$, and $s_t < 1$.  As $s_{t} - x_{t,i} \ge x_{t,j} > 0$ and $s_{t} - x_{t,i} < s_{t} < 1$, so $x_{t+1,i} = \sqrt{s_{t} - x_{t,i}} (1 - \sqrt{s_{t} - x_{t,i}}) > 0$. Similarly, $x_{t+1,j} > 0$.
\end{proof}
\begin{proof}[Proof of Lemma~\ref{lm:part1}]
Let $T_1$ be a random variable that denotes the time it takes for every agent to make at least one move. We next prove that conditions \eqref{def:warmup:1} and \eqref{def:warmup:3} of the warm-up phase (Definition~\ref{def:warmup}) are satisfied for $t \ge T_1$.

Condition \eqref{def:warmup:3}.
Notice that Lemma~\ref{lm:prop1} \eqref{lm:prop1:1} implies $s_t > 0$ for $t \ge T_1 \ge n \ge 1$.
Let us now prove that $s_t < 1$ for $t \ge T_1$. Note that it is enough to prove $s_{T_1} < 1$ because $s_{T_1} < 1$ implies $s_t < 1$ for all $t \ge T_1$ using Lemma~\ref{lm:prop1} \eqref{lm:prop1:2}.
For contradiction, let us assume that $s_{T_1} \ge 1$. Let $t = T_1 - 1$. 
\begin{itemize}
    \item If $s_t - x_{t, i_t} = 0$, then $s_{T_1} = s_{t+1} = x_{t+1,i_t} = a < 1$, which contradicts $s_{T_1} \ge 1$.
    \item If $0 < s_t - x_{t, i_t} < 1$, then $s_{T_1} = s_{t+1} = \sqrt{s_t - x_{t, i_t}} < 1$, which also leads to contradiction. 
    \item If $s_t - x_{t, i_t} \ge 1$, then $x_{t+1,i_t} = 0$ and $s_t \ge s_t - x_{t, i_t} \ge 1$. Doing induction in the backward direction, starting from $\tau = t = T_1 - 1$ and going to $\tau = 0$, $x_{\tau+1,i_\tau} = 0$ for every $\tau < T_1$. As every agent $i$ makes at least one move before $T_1$, so every agent $i \in [n]$ has $x_{T_1, i} = 0$. So, $s_{T_1} = 0$. Contradiction. 
\end{itemize}
This completes the proof for condition \eqref{def:warmup:3} of the warm-up phase.

Condition \eqref{def:warmup:1}.
Let us now prove that $x_{t,i} \le 1/4$ for all $t \ge T_1$ and $i \in [n]$. Fix a $t \ge T_1$ and an $i \in [n]$. As every agent has played at least once before $T_1$, so $i$ must have made a move before $t$. Let $\tau < t$ denote the most recent time before $t$ when agent $i$ made the best-response move. 
By definition of $\tau$, $i$ did not play between $\tau+1$ and $t$, so $x_{t,i} = x_{\tau+1,i}$. Finally, $x_{t,i} \le 1/4$ because
\begin{itemize}
    \item if $s_{\tau} - x_{\tau,i} = 0$, then $x_{\tau+1,i} = a \le 1/4$,
    \item if $0 < s_{\tau} - x_{\tau,i} < 1$, then $x_{\tau+1,i} = \sqrt{s_{\tau} - x_{\tau,i}} - (s_{\tau} - x_{\tau,i}) \le \max_{\beta \in (0,1)} \beta (1 - \beta) \le 1/4$, and
    \item if $s_{\tau} - x_{\tau,i} \ge 1$, then $x_{\tau+1,i} = 0 \le 1/4$.
\end{itemize}
This completes the proof for condition \eqref{def:warmup:1} of the warm-up phase.

We have shown that $s_t > 0$ for $t \ge T_1$, so there must be at least one agent with positive output for all $t \ge T_1$. Let $i$ be an agent that has positive output $x_{T_1,i} > 0$ at time $T_1$. But, it is possible that $i$ is the only agent with positive output, and every other agent $j \neq i$ has $x_{T_1,j} = 0$.

Condition \eqref{def:warmup:2}.
Let $T_2$ denote the additional steps after $T_1$, if any, required to get at least two agents with positive output. Notice that $T_2$ can be upper bounded by the time it takes to get the first head (select an agent $j \neq i$) of a geometric random variable with parameter $p \ge (1 - U)$ because $i$ can be assigned a probability of at most $U$ at each time step. When a $j \neq i$ is selected at time $t = T_2 - 1$, then $x_{T_2,j} = x_{t+1,j} = \sqrt{s_t - x_{t,j}} - (s_t - x_{t,j}) = \sqrt{x_{t,i}} - x_{t,i} \in (0,1)$ as $x_{t,i} \in (0,1)$. Further, from Lemma~\ref{lm:prop1} \eqref{lm:prop1:3}, we know that there will always be at least two agents with positive output this time onward. So, for $t \ge T_1 + T_2$, the action profile $x_t$ satisfies condition \eqref{def:warmup:2} of the warm-up phase.

To summarize, for $t \ge T_1 + T_2$, the action profile $x_t$ satisfies all conditions required for the completion of the warm-up phase (Definition~\ref{def:warmup}). So, $T_{warm} \le T_1 + T_2$. Let us now prove a high probability upper bound on $T_1 + T_2$, which implies the same bound for $T_{warm}$. 
From Lemma~\ref{lm:coupon}, we know that $\Prob[T_1 > \frac{1}{L}\ln (\frac{n}{\delta})] < \delta$ for any $\delta \in (0,1)$. Set $\delta = \frac{\delta_1}{2}$, we get $\Prob[T_1 > \frac{1}{L}\ln (\frac{2n}{\delta_1})] < \frac{\delta_1}{2}$.
As $T_2$ underestimates the time till the first head for a geometric random variable with parameter $(1-U) > \frac{1}{2}$, we have $\Prob[T_2 > k] \le U^k < (\frac{1}{2})^k$. Setting $k = \lg(\frac{2}{\delta_1})$, we get $\Prob[T_2 > \lg(\frac{2}{\delta_1})] < \frac{\delta_1}{2}$.
Using union bound, we get $T_{warm} \le T_1 + T_2 \le \frac{1}{L}\ln (\frac{2n}{\delta_1}) + \lg(\frac{2}{\delta_1}) = O(\frac{1}{L}\log(\frac{n}{\delta_1}))$ w.p. $1- \delta_1$ as required.
\end{proof}
\begin{proof}[Proof of Lemma~\ref{lm:outputlb}]
If $T_{warm} = 0$, i.e., all required conditions for completion of the warm-up phase are satisfied by the initial state $x_0$, then we trivially get $s_{T_{warm}} = s_0 \ge \gamma$.

Let us assume that $T_{warm} > 0$. Let us focus on the time step $T_{warm} - 1$. Let $t = T_{warm} - 1$. As $T_{warm} = t+1 > 0$, $t \ge 0$. We need to lower bound $s_{t+1}$.

As $t+1$ is the smallest time when all conditions for completion of the warm-up phase, Definition~\ref{def:warmup}, are satisfied, therefore at time $t$, there must be at least one violation. We do a case analysis depending upon which condition was violated at $t$.

\noindent\textbf{Case 1:} Definition~\ref{def:warmup} condition \eqref{def:warmup:2} violated at $t$, i.e., there is only one agent $i$ with $x_{t,i} > 0$. 

First, notice that at time $t$ an agent $j \neq i$ makes a transition to a positive $x_{t+1,j}$ to satisfy all three conditions of Definition~\ref{def:warmup}. Check that the conditions \eqref{def:warmup:1} and \eqref{def:warmup:3} of Definition~\ref{def:warmup} must not have been violated at time $t$ because then in a single step, we could not have satisfied all three conditions. This implies that $s_t = x_{t,i} \le 1/4$.

Let us now trace our steps back from $t$ to $0$. We claim that all transitions before time $t$ were made by agent $i$, i.e., for every $\tau < t$, $i = i_{\tau}$. If not, let $\tau < t$ be the most recent transition by an agent $k \neq i$. As $x_{\tau+1,k} = x_{t,k} = 0$, therefore $s_{\tau} - x_{\tau,k} \ge 1$, but $s_{\tau} - x_{\tau,k} = s_{t} - x_{t,k} = x_{t,i} \le 1/4$. Contradiction. 

As $i$ makes all the transitions before time $t$, we have either (i) if $t = 0$, then $x_{t,i} = x_{0,i}$; or (ii) if $t > 0$, then $x_{t,i} = a$ as a response to $0$ output by everyone else. So, $s_{t+1} \ge x_{t+1,i} = x_{t,i} \ge \min(a, x_{0,i})$.

\noindent\textbf{Case 2:} Definition~\ref{def:warmup} condition \eqref{def:warmup:3} violated at $t$, i.e., total output $s_t \ge 1$.

Let $i$ be the agent that makes the move at $t$ to decrease total output from $s_t \ge 1$ to $s_t < 1$. We argued in Case 1 that condition \eqref{def:warmup:2} must have been satisfied at $t$. So, there are at least two agents with strictly positive output at $t$. Let $j \neq i$ be an agent other than $i$ that has $x_{t,j} > 0$. We claim that $x_{t,j} = x_{0,j}$.

We prove our claim by contradiction. Let us trace our steps back from $t$. Let agent $k$ make the transition at $t-1$. We show that $k \neq i, j$ because:
\begin{itemize}
    \item $k \neq i$ because: If $k = i$, then the transition at $t$ would have been redundant and $1 > s_{t+1} = s_t \ge 1$, contradiction.
    \item $k \neq j$ because: Notice that $x_{t,k} = 0$ because if $x_{t,k} > 0$ then either: (i) $s_{t-1} - x_{t-1,k} = 0$, but this is not possible as $s_{t-1} - x_{t-1,k} \ge x_{t-1,i} = x_{t,i} > 0$; (ii) $0 < \sqrt{s_{t-1} - x_{t-1,k}} < 1$, but this is also not possible because then $1 \le s_t = \sqrt{s_{t-1} - x_{t-1,k}} < 1$. As $x_{t,j} > 0$, so $k \neq j$.
\end{itemize}
Repeating the same argument, for every $\tau < t$, we can show that $i_{\tau} \neq j$. So, $x_{t,j} = x_{0,j} \implies s_{t+1} \ge x_{t+1,j} = x_{t,j} = x_{0,j}$.

\noindent\textbf{Case 3:} Definition~\ref{def:warmup} condition \eqref{def:warmup:1} violated at $t$, i.e., there is an agent $i$ with $x_{t,i} > 1/4$.

Let $\alpha = \sum_{j \neq i} x_{t,j}$. W.l.o.g., we can assume that conditions \eqref{def:warmup:2} and \eqref{def:warmup:3} are satisfied at time $t$ because we have already considered their violation in the previous cases. So, $\alpha > 0$ (by condition \eqref{def:warmup:2}) and $s_t = x_{t,i} + \alpha < 1 \implies \alpha < 1 - x_{t,i} < 1$ (by condition \eqref{def:warmup:3}). As $\alpha < 1$, we have $s_{t+1} = \sqrt{s_t - x_{t,i}} = \sqrt{\alpha}$. Let us now lower bound $\alpha$.

If $t=0$, then we trivially have $s_{t+1} \ge \sqrt{\alpha} \ge \alpha = \sum_{j \neq 1} x_{0,j}$ (and we know that $\alpha > 0$, so there is an agent $j \neq i$ with positive $x_{0,j}$).
Now, let $t > 0$.
In the proof of Lemma~\ref{lm:part1}, we argued that if an agent $j$ plays at least one move before time $\tau$, for any $\tau$, then $x_{\tau,j} \le 1/4$. As $x_{t,i} > 1/4$, so $i_{\tau} \neq i$ for all $\tau < t$, which implies $x_{t,i} = x_{0,i}$. 
As $\alpha > 0$, $x_{t,i} > 0$ and $s_t = \alpha + x_{t,i} < 1$, we know that $s_{t} = \sqrt{s_{t-1} - x_{t-1,i_{t-1}}} = \sqrt{x_{t-1,i} + \beta} = \sqrt{x_{0,i} + \beta}$ for some $\beta \in [0, 1-x_{0,i})$. So, $s_t = \alpha + x_{t,i} = \alpha + x_{0,i} \ge \min_{\beta \in [0, 1-x_{0,i})} \sqrt{x_{0,i} + \beta} = \sqrt{x_{0,i}}$, which implies that $\alpha \ge \sqrt{x_{0,i}}(1 - \sqrt{x_{0,i}}) > (1 - \sqrt{x_{0,i}})/2$ as $x_{0,i} > 1/4$. So, $s_{t+1} = \sqrt{\alpha} \ge \sqrt{(1 - \sqrt{x_{0,i}})/2}$.
\end{proof}

\begin{proof} [Proof of Lemma~\ref{lm:markov}]
Consider the biased random walk $(z_t)_{t \ge 0}$, where $z_{t+1} = z_{t} + 1$ w.p. $p < 1/2$ and $z_{t+1} = z_{t} - 1$ w.p. $1-p > 1/2$. Notice that the behavior of $z_t$ is almost similar to the behavior of $y_t$ defined earlier (Figure~\ref{fig:markov}), except that $y_t$ cannot move left from state $1$ but $z_t$ can. Let us assume that $z_0 = y_0 = k$ and $z_t$ is coupled with $y_t$, i.e., $z_t$ moves left if $y_t$ moves left (or stays at state $1$) and $z_t$ moves right if $y_t$ moves right. Notice that $y_t \ge z_t$, for all $t$, by construction.

It is easy to check that, if $z_{\tau} \le 1$ at time $t$, then $y_{t} = 1$ for some $t \le \tau$. Because, if $y_{\tau} = z_{\tau}$, then trivially $y_t = 1$ for $t = \tau$. Else $y_{\tau} > z_{\tau}$, which can only happen if $z_{t} = y_{t} = 1$ at some time $t$ and $z_{t+1} = 0$ but $y_{t+1} = 1$. Similarly, we can show that, if $z_{\tau} \le 2 - m$, then $y_{t}$ has visited state $1$ at least $m$ times by time $\tau$, i.e., $| \{ t \le \tau \mid y_{t} = 1 \} | \ge m$. The argument is similar as before: The first time $z_{t}$ is at $1$, $y_t$ is also at $1$ (the first visit). Further, the difference between $y_t$ and $z_t$ increases only when $y_t$ is at state $1$ and tries unsuccessfully to move left but $z_t$ moves left. As $y_{\tau} \ge 1$ and $z_{\tau} \le 2 - m$, so $y_{\tau} - z_{\tau} \ge m - 1$, so $y_t$ makes at least $m-1$ additional visits to state $1$ by time $\tau$. 

Next, we bound the time it takes for $z_t$ fall below $2 - m$. Let $Z_i$ be a random variable that takes value $1$ w.p. $p$ and $-1$ w.p. $1-p$. $z_t$ can be written as $z_t = \sum_{i = 1}^{t} Z_i + k$. Notice that $\Exp[\sum_{i = 1}^{t} Z_i] = t(2p - 1)$. Applying Hoeffding's inequality, we have
\begin{align*}
    \Prob[z_t > 2 - m] &= \Prob\left[ \sum_{i = 1}^t Z_i + k \ge 3 - m \right] = \Prob\left[ \sum_{i = 1}^t Z_i - \Exp\left[\sum_{i = 1}^t Z_i \right] \ge 3 - m - k + t(1 - 2p) \right] \\
    &\le \exp\left( \frac{-2 (t(1 - 2p) - m - k + 3)^2}{4 t } \right) \le \exp\left( \frac{-(t(1 - 2p) - m - k + 3)^2}{2 t } \right).
\end{align*}
Setting $t \ge \frac{4}{1-2p} \max\left( m + k, \frac{1}{1-2p} \ln\left( \frac{1}{\delta} \right) \right)$, we get
\[
    \exp\left( \frac{-(t(1 - 2p) - m - k + 3)^2}{2t } \right) \le \exp\left( \frac{- 3^2 t^2 (1 - 2p)^2 }{4^2 2 t } \right) \le \exp\left( \frac{-9}{8} \frac{t (1 - 2p)^2 }{4} \right) \le \delta.
\]
\end{proof}

\begin{proof}[Proof of Lemma~\ref{lm:potential}]
Proof of Lemma~\ref{lm:potential} \eqref{lm:potential:1}.
As $z_i^* = \frac{n-1}{n^2}$, we have $\sigma^* = \frac{n-1}{n}$ and $\sum_{i < j} z^*_i z^*_j = \binom{n}{2} \left(\frac{n-1}{n^2}\right)^2 = \frac{(n-1)^3}{2 n^3}$. Plugging this into $f$ we get
\begin{equation*}\label{eq:lm:potential:1}
    f(z^*) = \frac{1}{3} (\sigma^*)^3 - \sum_{i < j} z^*_i z^*_j + \frac{1}{6} \left(\frac{n-1}{n} \right)^3 = \frac{1}{3} \left(\frac{n-1}{n} \right)^3 - \binom{n}{2} \left(\frac{n-1}{n^2} \right)^2 + \frac{1}{6} \left(\frac{n-1}{n} \right)^3 = 0.
\end{equation*}

Proof of Lemma~\ref{lm:potential} \eqref{lm:potential:2}.
Notice that we can write $\sum_{i < j} z_i z_j$ as $ \left( \sum_{i} z_i \right)^2 - \sum_{i} z_i^2$. Using this
\begin{equation*}
    f(z) = \frac{1}{3} \sigma^3 - \sum_{i < j} z_i z_j + \frac{1}{6} \left(\frac{n-1}{n} \right)^3 = \frac{1}{3} \sigma^3 - \frac{1}{2} \sigma^2 + \frac{1}{2} \sum_{i} z_i^2 + \frac{1}{6} \left(\frac{n-1}{n} \right)^3.
\end{equation*}
Now, as $\sum_{i} z_i^2$ is a convex function of $z$, we have 
\[
    \sum_{i} z_i^2 \ge \sum_{i} \left( \frac{\sigma}{n} \right)^2 = \frac{\sigma^2}{n} \implies f(z) \ge \frac{1}{3} \sigma^3 - \left(\frac{n-1}{2n} \right) \sigma^2 + \frac{1}{6} \left(\frac{n-1}{n} \right)^3.
\]
Let $g(y) = \frac{1}{3} y^3 - \frac{n-1}{2n} y^2$ for $y \in [0,1]$. Notice that $f(z) \ge g(\sigma) - \frac{1}{6} \left(\frac{n-1}{n} \right)^3$. Let us now lower bound $g(y)$.
\[
    g'(y) = y^2 - \frac{n-1}{n} y = 0 \implies y \in \left\{ 0, \frac{n-1}{n} \right\}.
\]
Notice that $g'(y) \le 0$ if $y \in [0, (n-1)/n]$ and $g'(y) \ge 0$ for $y \ge (n-1)/n$. So,
\begin{multline*}
    \min_{y \in [0,1]} g(y) 
    = g\left(\frac{n-1}{n}\right) 
    = \frac{1}{3} \left(\frac{n-1}{n}\right)^3 - \frac{n-1}{2n} \left(\frac{n-1}{n}\right)^2 
    = - \frac{1}{6} \left(\frac{n-1}{n}\right)^3  \\
    \implies f(z) 
    \ge g(\sigma) + \frac{1}{6} \left(\frac{n-1}{n} \right)^3 
    \ge g\left(\frac{n-1}{n}\right)  + \frac{1}{6} \left(\frac{n-1}{n} \right)^3 
    = 0.
\end{multline*}
On the other hand, $f(z) \le \frac{1}{2}$ because $f(z) = \frac{1}{3} \sigma^3 - \sum_{i < j} z_{i} z_{j} + \frac{1}{6}\left(\frac{n-1}{n}\right)^3 \le \frac{1}{3} \sigma^3 + \frac{1}{6} \le \frac{1}{2}$.

Proof of Lemma~\ref{lm:potential} \eqref{lm:potential:3}. 
We are given that $\sigma \le 3(n-1)/(4n)$. We showed earlier that $f(z) \ge g(\sigma) + ((n-1)/n)^3/6$ and that $g(y)$ decreases for $y \in [0, (n-1)/n]$, so
\begin{multline*}
    \min_{y \in [0, \frac{3(n-1)}{4n}]} g(y) \ge g\left(\frac{3(n-1)}{4n}\right) = \left( \frac{1}{3}\frac{3^3}{4^3} - \frac{1}{2}\frac{3^2}{4^2} \right) \left(\frac{n-1}{n} \right)^3 = -\frac{3^2}{4^3} \left(\frac{n-1}{n} \right)^3 \implies \\
    f(z) \ge g(\sigma) + \frac{1}{6} \left(\frac{n-1}{n} \right)^3 \ge g\left(\frac{3(n-1)}{4n}\right)  + \frac{1}{6} \left(\frac{n-1}{n} \right)^3 =  \left(\frac{1}{6} - \frac{3^2}{4^3}\right) \left(\frac{n-1}{n} \right)^3 \ge \frac{1}{40} \left(\frac{n-1}{n} \right)^3.
\end{multline*}

Proof of Lemma~\ref{lm:potential} \eqref{lm:potential:4}. 
We are given that $\sigma \ge 3(n-1)/(4n)$. Let's show that $f$ is $1$-strongly convex if $\sigma \ge 3(n-1)/(4n)$. Taking derivative of $f$ we get
\begin{equation*}
    \frac{\partial f}{\partial z_i} = \sigma^2 - \sum_{j \neq i} z_j, \qquad
    \frac{\partial^2 f}{\partial z_i^2} = 2\sigma, \qquad 
    \frac{\partial^2 f}{\partial z_i \partial z_j} = 2\sigma - 1, \qquad \text{ for $i,j \in [n]$, $i \neq j$.}
\end{equation*}
So, the hessian $\nabla^2 f$ has $(\nabla^2 f)_{i,j} = 2 \sigma$ if $i = j$ and $(\nabla^2 f)_{i,j} = 2 \sigma - 1$ if $i \neq j$. The hessian can be written as $\nabla^2 f = (2 \sigma - 1) + I$, where $I$ is the $n \times n$ identity matrix. Let us now find the eigenvalues of $\nabla^2 f$; $\lambda$ is an eigenvalue if  $\det(\nabla^2 f - \lambda I) = \det((2 \sigma - 1) + (1-\lambda) I) = 0$. Using matrix determinant lemma,\footnote{\url{https://en.wikipedia.org/wiki/Matrix_determinant_lemma}} we have
\[
    \det((2 \sigma - 1) + (1-\lambda) I) = \left( 1 + \frac{(2\sigma-1)n}{1-\lambda} \right) (1-\lambda)^n = (1-\lambda)^{n-1} (2n\sigma - (n-1) - \lambda) = 0,
\]
which gives us $\lambda = 1$ or $\lambda = 2n\sigma - (n-1) = 2n (\sigma - \frac{n-1}{2n}) \ge 2n (\frac{3(n-1)}{4n} - \frac{n-1}{2n}) = \frac{n-1}{2} \ge 1$. As all eigenvalues of the hessian $\nabla^2 f$ are at least $1$, so $f$ is $1$-strongly convex. Now, as $f$ is $1$-strongly convex, we have
\[
    f(z) \ge f(z^*) + (\nabla f(z^*))^{\intercal} (z - z^*) + \frac{1}{2} || z - z^* ||_2^2.
\]
We have shown earlier that $f(z^*) = 0$. Further, as
\[
    (f(z^*))_i = \left. \frac{\partial f}{\partial z_i} \right|_{z^*} = (\sigma^*)^2 - \sum_{j \neq i} z^*_j = \left( \frac{n-1}{n} \right)^2 - \sum_{j \neq i} \left( \frac{n-1}{n^2} \right) = 0,
\]
for every $i$, we have $\nabla f(z^*) = 0$, which gives our required result $f(z) \ge \frac{1}{2} || z - z^* ||_2^2$.

Proof of Lemma~\ref{lm:potential} \eqref{lm:potential:5}.
The proof is similar to that of Lemma~\ref{lm:potential} \eqref{lm:potential:4}, but instead of strong convexity, we focus on Lipschitz smoothness.

Let $y_i = \sum_{j \neq i} z_i = \sigma - z_i$. Let $\rho = \sum_i y_i = \sum_i (\sigma - z_i) = (n-1) \sigma \le (n-1)$. Further,
\begin{multline*}
    \sum_{i < j} z_i z_j = \sum_{i < j} (\sigma - y_i) (\sigma - y_j) = \binom{n}{2} \sigma^2 - \sigma \sum_{i < j} (y_i + y_j) + \sum_{i < j} y_i y_j \\
    = \frac{n (n-1)}{2} \sigma^2 - \sigma (n-1) \rho + \sum_{i < j} y_i y_j = \left( \frac{n}{2(n-1)} - 1 \right) \rho^2 + \sum_{i < j} y_i y_j =  - \frac{n - 2}{2(n-1)} \rho^2 + \sum_{i < j} y_i y_j.
\end{multline*}
Let us define the function $h(y)$ such that $h(y) = f(z)$; plugging in $\sigma = \frac{1}{(n-1)} \rho$ and $\sum_{i < j} z_i z_j = - \frac{n - 2}{2(n-1)} \rho^2 + \sum_{i < j} y_i y_j$, we have
\[
    h(y) = f(z) = \frac{1}{3 (n-1)^3} \rho^3 + \frac{n - 2}{2(n-1)} \rho^2 - \sum_{i < j} y_i y_j + \frac{1}{6} \left(\frac{n-1}{n} \right)^3 .
\]
Taking derivative of $h$ w.r.t. $y$ we get
\begin{align*}
    \frac{\partial h}{\partial y_i} &= \frac{1}{(n-1)^3} \rho^2 + \frac{n-2}{n-1} \rho - (\rho - y_i) = \frac{1}{(n-1)^3} \rho^2 - \frac{1}{n-1} \rho + y_i, \\
    \frac{\partial^2 h}{\partial y_i^2} &= \frac{2}{(n-1)^3} \rho - \frac{1}{n-1} \rho + 1,\qquad \frac{\partial^2 h}{\partial y_i \partial y_j} = \frac{2}{(n-1)^3} \rho - \frac{1}{n-1}, \qquad \text{ for $i,j \in [n]$, $i \neq j$.}
\end{align*}
So, the hessian can be written as $\nabla^2 h = \left( \frac{2}{(n-1)^3} \rho - \frac{1}{n-1} \right) + I$, where $I$ is the $n \times n$ identity matrix. Let us now find the eigenvalues of $\nabla^2 h$. Using the matrix determinant lemma, we have
\begin{multline*}
    \det\left( \left(\frac{2}{(n-1)^3} \rho - \frac{1}{n-1} \right) + (1-\lambda) I \right) = \left( 1 + \frac{n \left( \frac{2}{(n-1)^3} \rho - \frac{1}{n-1} \right)}{1-\lambda} \right) (1-\lambda)^n \\
    = (1-\lambda)^{n-1}  \left( \frac{2n}{(n-1)^3} \rho - \frac{n}{n-1} + 1 - \lambda \right),
\end{multline*}
which gives us $\lambda = 1$ or $\lambda = \frac{2n}{(n-1)^3} \rho - \frac{1}{n-1} \le \frac{2n(n-1)}{(n-1)^3} - \frac{1}{n-1} = \frac{2n - (n-1)}{(n-1)^2} = \frac{n+1}{(n-1)^2} = \frac{1}{(n-1)} + \frac{2}{(n-1)^2} \le 1$. As all eigenvalues of the hessian $\nabla^2 h$ are at most $1$, so $h(y)$ is $1$-Lipschitz smooth w.r.t. $y$. Let $y^*_i = \sum_{j \neq i} z_i^* = \left(\frac{n-1}{n}\right)^2$ and $\rho^* = \sum_i y_i^* = \frac{(n-1)^2}{n}$. Notice that $h(y^*) = f(z^*) = 0$ and $\nabla h (y^*) = 0$ because
\[
    (h(y^*))_i = \left. \frac{\partial h}{\partial y_i} \right|_{y^*} = \frac{1}{(n-1)^3} \frac{(n-1)^4}{n^2} - \frac{1}{n-1} \frac{(n-1)^2}{n} + \frac{(n-1)^2}{n^2} = \frac{n-1}{n^2} ( 1 - n + (n-1) ) = 0,
\]
for every $i \in [n]$. This gives us
\[
    f(z) = h(y) \le h(y^*) + (\nabla h(y^*))^{\intercal} (y - y^*) + \frac{1}{2} || y - y^* ||_2^2 = \frac{1}{2} \sum_i (y_i - y_i^*)^2 = \frac{1}{2} \sum_i \left( \sum_{j \neq i} ( z_i - z_i^* ) \right)^2 .
\]
\end{proof}
\begin{proof}[Proof of Lemma~\ref{lm:lipschitz}]
 As $|| z - z^* ||_2 \le \epsilon$, we have
\begin{itemize}
    \item $| z_i - z_i^* | \le || z - z^* ||_2 \le \epsilon$ for every $i \in [n]$,
    \item using Cauchy--Schwarz inequality, $| \sigma - \sigma^* | = | \sum_i (z_i - z_i^*) | \le \sqrt{n \sum_i (z_i - z_i^*)^2 } = \sqrt{n} || z - z^* ||_2 \le \sqrt{n} \epsilon $,
    \item again using Cauchy--Schwarz inequality, $| (\sigma - z_i) - (\sigma^* - z_i^*) | = | \sum_{j \neq i} (z_j - z_j^*) | \le \sqrt{n-1}\epsilon$ for every $i \in [n]$.
\end{itemize}
Remember that $z_i^* = \frac{n-1}{n^2}$, $\sigma^* = \sum_j z_j^* = \frac{n-1}{n}$, and $\sigma^* - z_i^* = \sum_{j \neq i} z_j^* = \left(\frac{n-1}{n}\right)^2$ for every $i \in [n]$.
 
Let us fix an arbitrary agent $i$.
Let us prove that the best response of agent $i$ is $(1/2)$-Lipschitz smooth with respect to the output of other agents, near the equilibrium point. We can write the following bounds on $\sigma - z_i$
\begin{itemize}
    \item $\sigma - z_i \le (\sigma^* - z^*_i) + \sqrt{n-1} \epsilon < \left(\frac{n-1}{n}\right)^2 + \frac{\sqrt{n-1}}{n \sqrt{n}} < \left(\frac{n-1}{n}\right) + \frac{1}{n} \le 1$,
    \item $\sigma - z_i \ge (\sigma^* - z^*_i) - \sqrt{n-1} \epsilon > \left(\frac{n-1}{n}\right)^2 - \frac{\sqrt{n-1}}{n \sqrt{n}} \ge \left(\frac{n-1}{n}\right)^2 - \frac{(n-1)\sqrt{n-1}}{2n \sqrt{n}} = \left(\frac{n-1}{n}\right)^{3/2} \left( \sqrt{\frac{n-1}{n}} - \frac{1}{2} \right) \ge \left(\frac{2}{3}\right)^{3/2} \left( \sqrt{\frac{2}{3}} - \frac{1}{2} \right) \ge \frac{1}{6}$.
\end{itemize}
Let $g(y) = \sqrt{y} - y$. The best response of agent $i$ is $g(y) = \sqrt{y} - y$ given the total output by others is $y$. We claim $|g'(y)| \le 1/2$ for $y \in [1/9, 1]$ because (i) $g'(y) = \frac{1}{2 \sqrt{y}} - 1 \ge -1/2$ for $y \le 1$, and (ii) $g'(y) = \frac{1}{2 \sqrt{y}} - 1 \le 1/2$ for $y \ge \frac{1}{9}$.

Let $\widehat{z}_i$ denote the best response of agent $i$ given the total output of other agents is $\sigma_i - z_{i}$. As $\sigma_i - z_{i} \ge 1/6$ and $\sigma^*_i - z^*_i = ((n-1)/n)^2 \ge 4/9$, and the best response is $1/2$-smooth in $[1/9,1]$: 
\[
    |\widehat{z}_i - z_i| \le |\widehat{z}_i - z^*_i| + |z^*_i - z_i| \le \frac{1}{2} | (\sigma_i - z_i) - (\sigma^*_i - z^*_i) | + |z^*_i - z_i| \le \left(\frac{\sqrt{n-1}}{2} + 1\right) \epsilon \le \sqrt{n} \epsilon.
\]

Let us now prove that the utility of agent $i$ is $3$-Lipschitz smooth with respect to his output, near the equilibrium profile. Using $1/2$-smoothness of best response, we have $\widehat{z}_i - z^*_i \le \frac{1}{2} | (\sigma_i - z_i) - (\sigma^*_i - z^*_i) | \le \frac{\sqrt{n-1} \epsilon}{2}$. We can write the following bounds on $(\sigma - z_i) + \widehat{z}_i$: 
\begin{itemize}
    \item $(\sigma - z_i) + \widehat{z}_i = \sqrt{\sigma - z_i} < 1$ because $\widehat{z}_i$ is the best response to $(\sigma - z_i)$ and $\sigma - z_i < 1$ (shown earlier),
    \item $(\sigma - z_i) + \widehat{z}_i \ge ((\sigma^* - z^*_i) - \sqrt{n-1} \epsilon) + (z^*_i - \frac{\sqrt{n-1} \epsilon}{2}) > \frac{n-1}{n} - \frac{3 \sqrt{n-1}}{2 n \sqrt{n}} \ge \frac{2}{3} - \frac{3 \sqrt{2}}{2 \cdot 3 \sqrt{3}} \ge \frac{1}{4}$.
\end{itemize}
Similarly, we can write the following bounds on $\sigma$:
\begin{itemize}
    \item $\sigma \le \sigma^* + \sqrt{n} \epsilon < \frac{n-1}{n} + \sqrt{n}\frac{1}{n \sqrt{n}} = 1$,
    \item $\sigma \ge \sigma^* - \sqrt{n} \epsilon > \frac{n-1}{n} -\sqrt{n}\frac{1}{n \sqrt{n}} = \frac{n-2}{n} \ge \frac{1}{3}$.
\end{itemize}

Let $h(y) = u_i(y, z_{-i}) = \frac{y}{(\sigma - z_i) + y} - y$, we have $h'(y) = \frac{\partial u_i(y, z_{-i})}{\partial y} = \frac{\sigma - z_i}{(\sigma - z_i + y)^2} - 1$. We claim $|h'(y)| \le 3$ for $(\sigma - z_i) + y \ge 1/4$ because (i) $h'(y) = \frac{\sigma - z_i}{(\sigma - z_i + y)^2} - 1 \ge -1$, and (ii) $h'(y) \le \frac{\sigma - z_i}{(\sigma - z_i + y)^2} - 1 \le \frac{1}{\sigma - z_i + y} - 1 \le 3$. Now, as $(\sigma - z_i) + \widehat{z}_i \ge 1/4$ and $(\sigma - z_i) + z_i = \sigma \ge 1/3$, we can use this $3$-smoothness of the utility function to get $|u_i(\widehat{z}_i, z_{-i}) - u_i(z)| \le 3 |\widehat{z}_i - z_i| \le 3 \sqrt{n} \epsilon$.
\end{proof}

\section{All Simulation Plots}\label{sec:app:plots}
We observe the following w.r.t. $n$ and $\gamma$.
\begin{enumerate}
    \item $n$: The time for \unif\ and \best\ is proportional to $n \log n$ and $n$. The time for \round, \lex, and \worst\ seems between $n^2$ and $n^3$. Important observations: (i) our bound for \unif\ is tight; (ii) we have a polynomial dependency on $n$ for all models.
    \item $\gamma$: $\gamma$ only has an additive $\Theta(\log\log(1/\gamma))$ term for \unif\ (Corollary~\ref{thm:hom:unif}) without a function of $n$ or $\epsilon$ being multiplied to $\log\log(1/\gamma)$. Our simulations show a similar dependency for \lex, \round, \worst, and \best.
\end{enumerate}
\begin{figure}[ht]
    \centering
    \subfloat[Time vs $\log\log(1/\epsilon)$]{
        \includegraphics[width=0.45\textwidth]{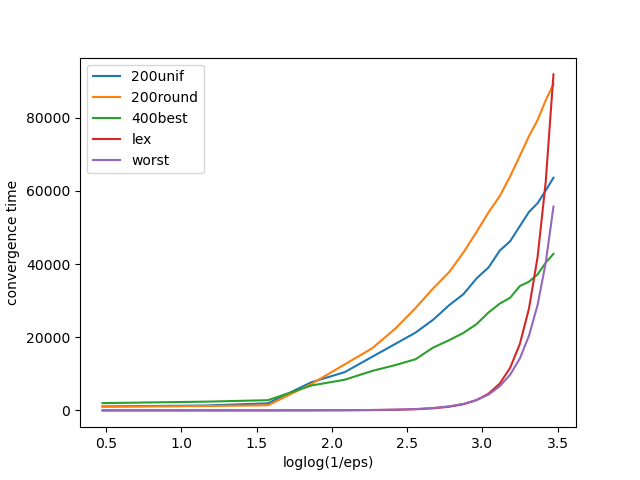}
        \label{fig:eps:3}
    }\\
    \subfloat[Time vs $(1/\epsilon)^{1/5}$]{
        \includegraphics[width=0.45\textwidth]{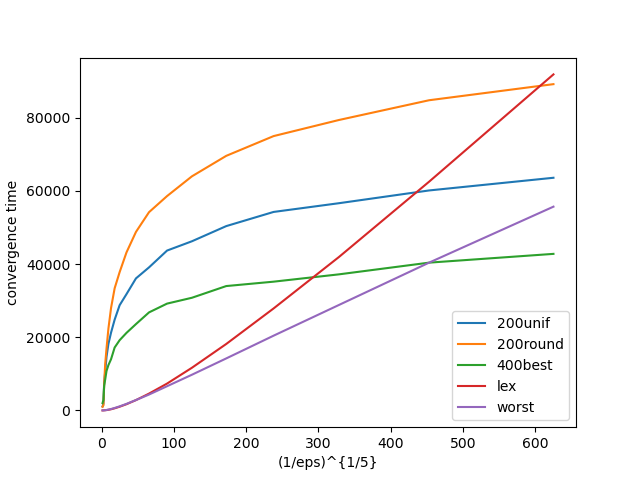}
        \label{fig:eps:4}
    }
    \subfloat[Time vs $(\log(1/\epsilon))^5$]{
        \includegraphics[width=0.45\textwidth]{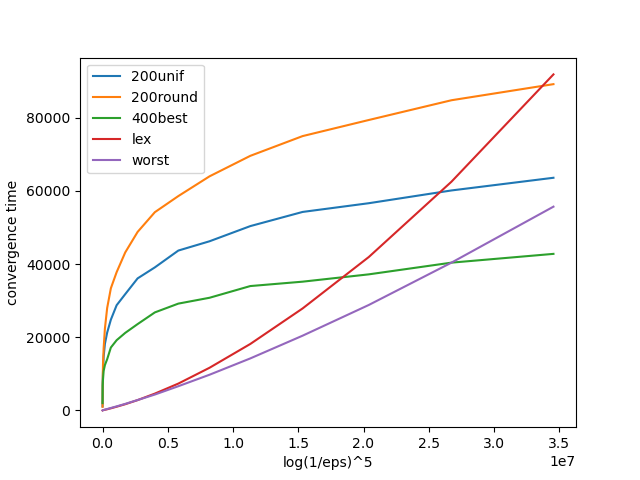}
        \label{fig:eps:5}
    }
\caption{Convergence time w.r.t. $\epsilon$.}
\label{fig:eps}
\end{figure}

\begin{figure}[ht]
  \centering
  \subfloat[Time vs $n \log n$]{
  \includegraphics[width=0.45\textwidth]{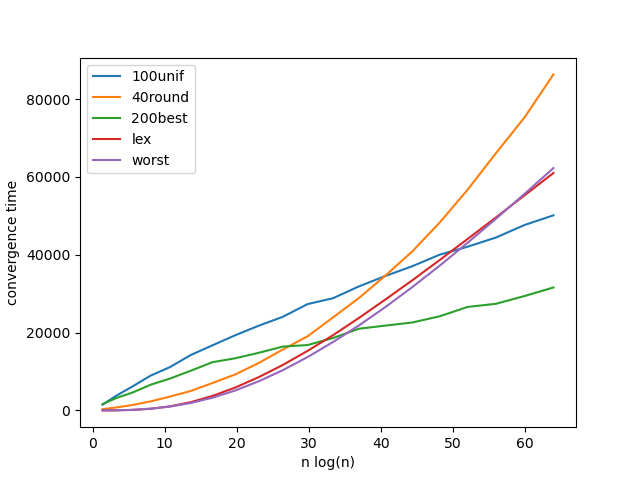}
  \label{fig:n:1}
  }\\
  \subfloat[Time vs $n^2$]{
  \includegraphics[width=0.45\textwidth]{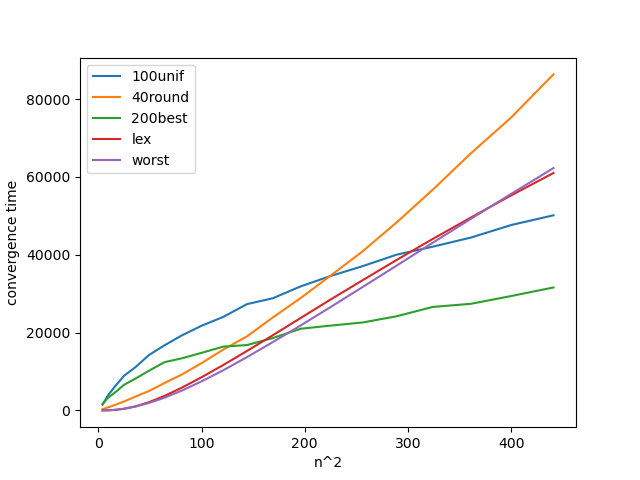}
  \label{fig:n:2}
  }
  \subfloat[Time vs $n^3$]{
  \includegraphics[width=0.45\textwidth]{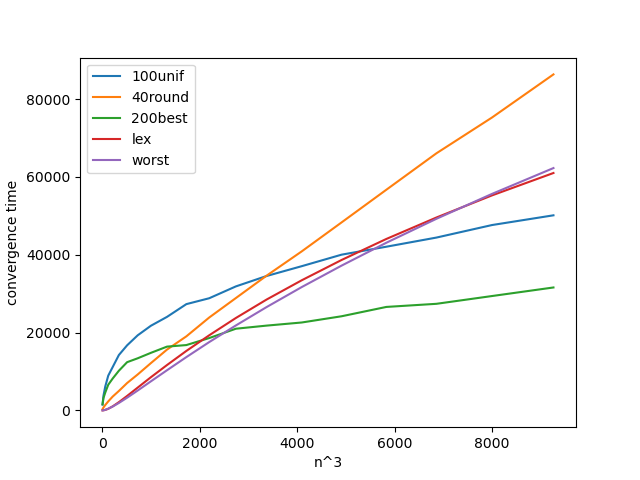}
  \label{fig:n:3}
  }
\caption{Convergence time w.r.t. $n$.}
\label{fig:n}
\end{figure}

\begin{figure}[ht]
  \centering
  \subfloat[Time vs $\log\log(1/\gamma)$.]{
  \includegraphics[width=0.45\textwidth]{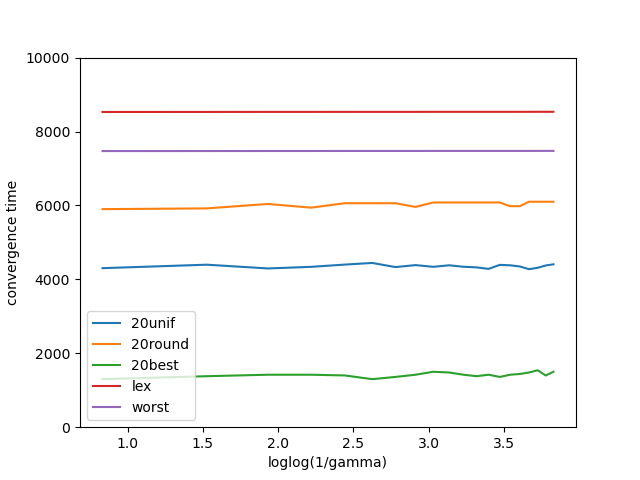}
  \label{fig:gamma:1}
  }
  \subfloat[Relative convergence time (after subtracting the time taken for $\gamma = 1/10$ for each algorithm) vs $1/\epsilon$.]{
  \includegraphics[width=0.45\textwidth]{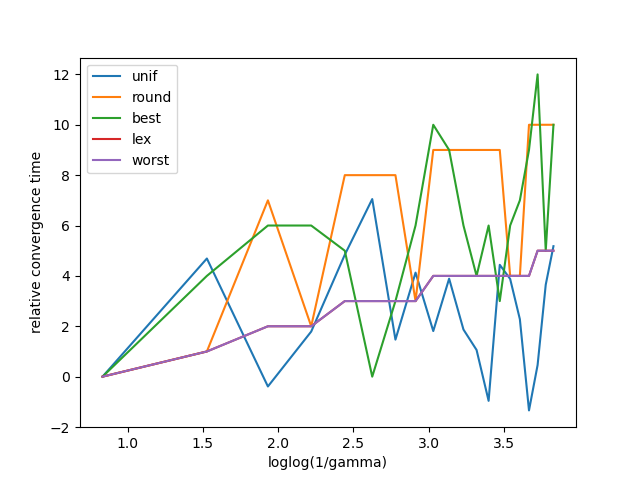}
  \label{fig:gamma:2}
  }
\caption{Convergence time w.r.t.  $\gamma$.}
\label{fig:gamma}
\end{figure}





\end{document}